\newtheorem{theorem}{Theorem}[section]
\newtheorem{proposition}[theorem]{Proposition}
\newtheorem{remark}[theorem]{Remark}
\newfont{\BB}{msbm10 scaled\magstep1}
\newfont{\bb}{msbm8}
\def\R{\mbox{\BB R}}
\def\E{\mbox{\BB E}}
\def\D{\mbox{\BB D}}
\def\log{{\rm ln}\ }
\newcommand{\x}{x} 
\newcommand{\xx}{x} 
\newcommand{\vv}{v} 
\newcommand{\tstart}{{0}} 
\newcommand{\tend}{{1}} 
\begin{document}

\title{On the relation between {optimal transport and Schr\"{o}dinger bridges:} A stochastic control viewpoint}
\author{Yongxin Chen, Tryphon Georgiou and Michele Pavon
\thanks{Y. Chen and T. Georgiou are with the Department of Electrical and Computer Engineering,
University of Minnesota,
200 Union street S.E. Minneapolis, Minnesota MN 55455, U.S.A. {\tt\small chen2468@umn.edu}, {\tt\small tryphon@umn.edu}}
\thanks{Michele Pavon is with the Dipartimento di Matematica,
Universit\`a di Padova, via Trieste 63, 35121 Padova, Italy {\tt\small pavon@math.unipd.it}}}
\markboth{December 14, 2014}{}
\maketitle
\begin{abstract}
We take a new look at the relation between the optimal transport problem and the Schr\"{o}dinger bridge problem from the stochastic control perspective. We show that the connections are richer and deeper than described in existing literature. In particular: a) We give an elementary derivation of the Benamou-Brenier fluid dynamics version of the optimal transport problem; b) We provide a new fluid dynamics version of the Schr\"{o}dinger bridge problem; c) We observe that the latter provides an important connection with optimal transport without zero noise limits; d) We propose and solve a fluid dynamic version of optimal transport with {\em prior\/}; { e) We can then view optimal transport with prior as the zero noise limit of Schr\"{o}dinger bridges when the prior is any Markovian evolution. In particular, we work out the Gaussian case. A numerical example of the latter convergence involving Brownian particles is also  provided. }
\end{abstract}

\begin{IEEEkeywords}
Optimal transport problem, Schr\"{o}dinger bridge, stochastic control, zero noise limit.
\end{IEEEkeywords}
\section{Introduction}
We discuss two problems of very different beginning. Optimal mass transport (OMT) originates in the work of Gaspar Monge in 1781 \cite{Monge} and seeks a transport plan that corresponds in an optimal way two distributions of equal total mass. The cost penalizes the distance that mass is transported to ensure exact correspondence. Likewise, data for Erwin Schr\"odinger's 1931/32 bridge problem  \cite{S1,S2} are again two distributions of equal total mass, in fact, probability distributions. Here however, these represent densities of diffusive particles at two points in time and the problem seeks the most likely path that establishes a correspondence between the two. A rich relationship between the two problems emerges in the case where the transport cost is quadratic in the distance, and in fact, the problem of OMT emerges as the limit of Schr\"odinger bridges as the diffusivity tends to zero. The parallel treatment of both problems highlights the time-symmetry of both problems and points of contact between stochastic optimal control and information theoretic concepts.

Historically, the modern formulation of optimal mass transport is due to Leonid Kantorovich \cite{Kantorovich} and the subject has been the focus of renewed and increased interest because of its relevance in a wide range of fields including economics, physics, engineering, and probability \cite{RachevR,Vil,Villani_oldnew}. In fact, Kantorovich's contributions and their impact to resource allocation was recognized with the Nobel Prize in Economics in 1975 while in the past twenty years contributions by Ambrosio,  Benamou, Brenier, McCann, Cullen, Gangbo, Kinderlehrer, Lott, Otto, Rachev, R\"uschendorf, Tannenbaum, Villani, and many others have launched a new fast developing phase, see e.g., \cite{GangboMcCann,Otto,BB,AGS,Vil,Villani_oldnew,NGT}.
On the other hand, the Schr\"{o}dinger bridge problem  \cite{S1,S2}, has been the subject of strong but intermittent interest by mostly probabilists, physicists, and quantum theorists.
Early important contributions were due to Fortet, Beurling, Jamison and F\"{o}llmer \cite{For,Beu,Jam,F2}, see \cite{W} for a survey. Renewed interest was sparked in the past twenty years after a close relationship to stochastic control was recognized \cite{DP,DPP,PW} and a similarly fast developing phase is underway, see the semi-expository paper \cite{leo} and \cite{MT,PT2,leo2,GP} for other recent contributions.

Besides the intrinsic importance of optimal mass transport to the geometry of spaces and the multitude of applications, a significant impetus for some recent work has been the need for effective computation \cite{BB}, \cite{AHT} which is often challenging. Likewise,
excepting special cases \cite{FH,FHS}, the computation of the optimal stochastic control for the Schr\"{o}dinger bridge problem is challenging, as it amounts to two partial differential equations nonlinearly coupled through their boundary values \cite{W}. Only very recently  implementable forms have become available for corresponding linear stochastic systems  \cite{CG2014,CGP,CGP3} and for versions of the problem involving Markov chains and Kraus maps of statistical quantum mechanics \cite{GP}; see also \cite{CGP4} which deals with the Schr\"{o}dinger bridge problem with finite or infinite horizon for a system of nonlinear stochastic oscillators.

The aim of the present paper is to elucidate some of the connections between optimal mass transport and Schr\"odinger bridges { thereby extending both theories}. We follow in the footsteps of L\'eonard \cite{leo2,leo}, who investigated their relation, and of Mikami and Thieullen \cite{Mik, mt, MT} who employed stochastic control and Schr\"{o}dinger bridges to solve the optimal transport problem.
This paper may then be seen to complement the results in these  papers by providing a unifying view of the relationship between these two problems via optimal control. In particular, we give an elementary derivation of the Benamou-Brenier fluid dynamics version of the Monge-Kantorovich problem. We also provide a time-symmetric fluid dynamic version of the Schr\"{o}dinger bridge problem different from \cite[Section 4]{leo}; it underscores that an important connection with optimal transport exists even without zero noise limits. We then formulate and solve a fluid dynamic version of the optimal transport problem with {\em prior}.  { This allows us to study zero noise limits of Schr\"{o}dinger bridges when the prior is any Markovian evolution. In particular, employing our results of \cite{CGP}, we study the case when the prior evolution is a Gauss-Markov process.

The outline of the paper is as follows: In Section \ref{BB}, we derive the Benamou-Brenier version of the OMT problem. In Section \ref{background}, we provide some background on the classical Schr\"{o}dinger bridge problem. Section \ref{SC} is devoted to characterizing the optimal forward and backward drift in the bridge problem. In Section \ref{TSF}, we give a control time-symmetric formulation of the Schr\"{o}dinger bridge problem. This leads, in the following Section \ref{FDFS}, to a new fluid dynamic formulation of the bridge problem. Section \ref{OMTprior} is dedicated to the optimal mass transfer problem with prior. In Section \ref{gaussian}, we investigate the zero noise limit when the prior is Gaussian. The paper concludes with two examples. In Section \ref{shifting}, we discuss the zero noise limit when the prior is Wiener measure and the goal is shifting the mean of a normal distribution. Finally, in Section \ref{NE}, we provide a numerical two-dimensional example of overdamped Brownian particles. In the zero noise limit, we obtain the solution of the corresponding OMT problem with prior.}

\section{Optimal mass transport as a stochastic control problem}\label{BB}
\subsection{The Monge-Kantorovich problem}
Given two distributions $\mu,\nu$ on ${\mathbb R}^n$ having equal total mass, the original formulation due to G.\ Monge sought to identify a transport (measurable) map $T$ from ${\mathbb R}^n\to{\mathbb R}^n$ so that the push-forward $T\sharp \mu$ is equal to $\nu$, in the sense that $\nu(\cdot)=\mu(T^{-1}(\cdot))$, while the cost of transportation $\int c(x,T(x))\mu(dx)$ is minimal. Here, $c(x,y)$ represents the transference cost from point $x$ to point $y$ and for the purposes of the present it will be $c(x,y)=\frac{1}{2}\|x-y\|^2$.

The dependence of the cost of transportation on $T$ is highly nonlinear which complicated early analyses of the problem. Thus, it was not until Kantorovich's relaxed formulation in 1942 that the Monge's problem received a definitive solution. In this, instead of the transport map one seeks a joint distribution $\Pi(\mu,\nu)$ on the product space $\R^n\times\R^n$, refered to as a ``coupling" between $\mu$ and $\nu$, so that the marginals along the two coordinate directions coincide with $\mu$ and $\nu$ respectively. Thence, one seeks to determine
\begin{equation}\label{OptTrans}
\inf_{\pi\in\Pi(\mu,\nu)}\int_{\R^n\times\R^n}\frac{1}{2}\|x-y\|^2d\pi(x,y)
\end{equation}
In case an optimal transport map exists, the optimal coupling has support on the graph of this map, see \cite{Vil}. Herein, we consider this relaxed Kantorovich formulation.
We wish first to give next an elementary derivation of the fact that Problem \ref{OptTrans} can be turned into a stochastic control problem as stated in \cite[formula (1.6)]{MT} and then, to derive an alternative ``fluid-dynamic'' formulation due to Benamou-Brenier. We strive for clarity rather than generality. {In particular, we (tacitly) assume throughout the paper that $\mu$ does not give mass to sets of dimension $\le n-1$. Then, by Brenier's theorem \cite{Vil}, there exists a unique optimal transport plan (Kantorovich) induced by a map (Monge) which is the gradient of a convex function.}

\subsection{A stochastic control formulation}
As customary, let us start by observing that
\begin{equation}\label{calvar}
\frac{1}{2}\|x-y\|^2=\inf_{\x\in\mathcal X_{xy}}\int_{\tstart}^{\tend}\frac{1}{2}\|\dot{\x}\|^2dt
\end{equation}
where $\mathcal X_{xy}$ is the family of $C^1([\tstart,\tend];\R^n)$ paths with $\x(\tstart)=x$ and $\x(\tend)=y$. Let
$$\x^*(t)= (\tend-t)x+ty
$$
be the solution of (\ref{calvar}), namely the straight line joining $x$ and $y$. Since $\x^*(t)$ is a Euclidean geodesic, any probabilistic average of the lengths of $C^1$ trajectories starting at $x$ at time $\tstart$ and ending in $y$ at time $\tend$ gives necessarily a higher value. Thus, the probability measure on $C^1([\tstart,\tend];\R^n)$ concentrated on the path $\{\x^*(t);\tstart\le t\le \tend\}$ solves the following problem
\begin{equation}\label{stoch}
\inf_{P_{xy}\in\D^1(\delta_{x},\delta_{y})}\E_{P_{xy}}\left\{\int_{\tstart}^{\tend}\frac{1}{2}\|\dot{\x}\|^2dt\right\},
\end{equation}
where $\D^1(\delta_{x},\delta_{y})$ are the probability measures on $C^1([\tstart,\tend];\R^n)$ whose initial and final one-time marginals are Dirac's deltas concentrated at $x$ and $y$, respectively. Since (\ref{stoch}) provides us with yet another representation for $\frac{1}{2}\|x-y\|^2$, in view of (\ref{OptTrans}), we also get that
\begin{eqnarray}\label{OptTrans2}
\inf_{\pi\in\Pi(\mu,\nu)}\int_{\R^n\times\R^n}\frac{1}{2}\|x-y\|^2d\pi(x,y)&=& \\
&&\hspace*{-2cm}
\inf_{\pi\in\Pi(\mu,\nu)}\int_{\R^n\times\R^n}\inf_{P_{xy}\in\D^1(\delta_{x},\delta_{y})}\E_{P_{xy}}\left\{\int_{\tstart}^{\tend}\frac{1}{2}\|\dot{\x}\|^2dt\right\}d\pi(x,y)\nonumber
\end{eqnarray}
Now observe that if $P_{xy}\in\D^1(\delta_{x},\delta_{y})$ and $\pi\in\Pi(\mu,\nu)$ then 
$$P=\int_{\R^n\times\R^n}P_{xy}d\pi(x,y)$$
is a probability measure in $\D^1(\mu,\nu)$, namely a measure on $C^1([\tstart,\tend];\R^n)$ whose one-time marginal at $t_{0}$ and $t_{1}$ are specified to be $\mu$ and $\nu$, respectively. Conversely, the disintegration of any measure $P\in\D^1(\mu,\nu)$ with respect to the initial and final positions yields $P_{xy}\in\D^1(\delta_{x},\delta_{y})$ and $\pi\in\Pi(\mu,\nu)$. Thus we get that the original optimal transport problem is equivalent to
\begin{equation}\label{stoch2}
\inf_{P\in\D^1(\mu,\nu)}\E_P\left\{\int_{\tstart}^{\tend}\frac{1}{2}\|\dot{\x}\|^2dt\right\}.
\end{equation}
So far, we have followed \cite[pp.\ 2-3]{leo}.
Instead of the ``particle'' picture, we can also consider the hydrodynamic version of (\ref{calvar}), namely the optimal control problem
\begin{eqnarray}\label{calvarhydro1}
\frac{1}{2}\|x-y\|^2=\inf_{\vv\in\mathcal V_{y}}\int_{\tstart}^{\tend}\frac{1}{2}\|\vv(\x^\vv(t),t)\|^2dt\\
\dot\x^\vv(t)=\vv(\x^\vv(t),t),\quad \x(\tstart)=x,\nonumber
\end{eqnarray}
where the admissible feedback control laws $\vv(\cdot,\cdot)$ in $\mathcal V_y$ are continuous and such that $\x^\vv(\tend)=y$.

Following the same steps as before, we get that the optimal transport problem is equivalent to the following stochastic control problem with atypical boundary constraints
\begin{subequations}\label{eq:stochcontr}
\begin{eqnarray}\label{stochcontr1}&&\inf_{\vv\in\mathcal V}\E\left\{\int_{\tstart}^{\tend}\frac{1}{2}\|\vv(\x^\vv(t),t)\|^2dt\right\}\\&& \dot{\x}^\vv(t)=\vv(\x^\vv(t),t),\quad {\rm a.s.},\quad \x(\tstart)\sim\mu,\quad \x(\tend)\sim\nu.\label{stochcontr2}
\end{eqnarray}
\end{subequations}
Finally suppose $d\mu(x)=\rho_0(x)dx$, $d\nu(y)=\rho_1(y)dy$ and $\x^\vv(t)\sim\rho(t,x)dx$. Then, necessarily, $\rho$ satisfies (weakly) the continuity equation
\begin{equation}\label{continuity}
\frac{\partial \rho}{\partial t}+\nabla\cdot(\vv\rho)=0
\end{equation}
expressing the conservation of probability mass. Moreover,
$$\E\left\{\int_{\tstart}^{\tend}\frac{1}{2}\|\vv(\x^\vv(t),t)\|^2dt\right\}=\int_{\R^n}\int_{\tstart}^{\tend}\frac{1}{2}\|\vv(x,t)\|^2\rho(t,x)dtdx.
$$
Hence (\ref{eq:stochcontr}) turns into the celebrated ``fluid-dynamic'' version of the optimal transport problem due to Benamou and Brenier \cite{BB}:
\begin{subequations}\label{eq:BB}
\begin{eqnarray}\label{BB1}&&\inf_{(\rho,v)}\int_{\R^n}\int_{\tstart}^{\tend}\frac{1}{2}\|\vv(x,t)\|^2\rho(t,x)dtdx,\\&&\frac{\partial \rho}{\partial t}+\nabla\cdot(\vv\rho)=0,\label{BB2}\\&& \rho(\tstart,x)=\rho_0(x), \quad \rho(\tend,y)=\rho_1(y).\label{boundary}
\end{eqnarray}\end{subequations}
The variational analysis for (\ref{eq:stochcontr}) or, equivalently, for (\ref{eq:BB})  can be carried out in many different ways. For instance, let $\mathcal P_{\rho_0\rho_1}$ be the family of flows of probability densities $\rho=\{\rho(\cdot,t); \tstart\le t\le \tend\}$ satisfying (\ref{boundary}) and let $\mathcal V$ be the family of continuous feedback control laws $\vv(\cdot,\cdot)$. Consider the unconstrained minimization of the Lagrangian over $\mathcal P_{\rho_0\rho_1}\times\mathcal V$
\begin{equation}\label{lagrangian}
\mathcal L(\rho,v)=\int_{\R^n}\int_{\tstart}^{\tend}\left[\frac{1}{2}\|\vv(x,t)\|^2\rho(t,x)+\lambda(x,t)\left(\frac{\partial \rho}{\partial t}+\nabla\cdot(\vv\rho)\right)\right]dtdx,
\end{equation}
where $\lambda$ is a $C^1$ Lagrange multiplier. Integrating by parts, assuming that limits for $x\rightarrow\infty$ are zero, we get
\begin{equation}\label{lagrangian2}\int_{\R^n}\int_{\tstart}^{\tend}\left[\frac{1}{2}\|\vv(x,t)\|^2+\left(-\frac{\partial \lambda}{\partial t}-\nabla\lambda\cdot \vv)\right)\right]\rho(x,t)dtdx+\int_{\R^n}\left[\lambda(x,\tend)\rho_1(x)-\lambda(x,\tstart)\rho_0(x)\right]dx.
\end{equation}
The last integral is constant over $\mathcal P_{\rho_0\rho_1}$ and can therefore be discarded. We are left to minimize
\begin{equation}\label{lagrangian3}\int_{\R^n}\int_{\tstart}^{\tend}\left[\frac{1}{2}\|\vv(x,t)\|^2+\left(-\frac{\partial \lambda}{\partial t}-\nabla\lambda\cdot \vv)\right)\right]\rho(x,t)dtdx
\end{equation}
over $\mathcal P_{\rho_0\rho_1}\times\mathcal V$.
We consider doing this in two stages, starting from minimization with respect to $\vv$ for a fixed flow of probability densities $\rho=\{\rho(\cdot,t); \tstart\le t\le \tend\}$ in $\mathcal P_{\rho_0\rho_1}$. Pointwise minimization of the integrand at each time $t\in[\tstart,\tend]$
gives that
\begin{equation}\label{optcond}
\vv^*_\rho(x,t)=\nabla\lambda(x,t)
\end{equation}
which is continuous. Plugging this form of the optimal control into (\ref{lagrangian3}), we get 
\begin{equation}
J(\rho)=-\int_{\R^n}\int_{\tstart}^{\tend}\left[\frac{\partial \lambda}{\partial t}+\frac{1}{2}\|\nabla\lambda\|^2\right]\rho(x,t)dtdx.
\end{equation}
In view of this, if $\lambda$ satisfies the Hamilton-Jacobi equation
\begin{equation}\label{HJ}
\frac{\partial \lambda}{\partial t}+\frac{1}{2}\|\nabla\lambda\|^2=0,
\end{equation}
then $J(\rho)$ is identically zero over $\mathcal P_{\rho_0\rho_1}$ and any $\rho\in\mathcal P_{\rho_0\rho_1}$ minimizes the Lagrangian (\ref{lagrangian}) together with the feedback control (\ref{optcond}).
We have therefore established the following \cite{BB}:
\begin{proposition}Let $\rho^*(x,t)$ with $t\in[\tstart,\tend]$ and $x\in {\mathbb R}^n$, satisfy
\begin{equation}\label{optev}
\frac{\partial \rho^*}{\partial t}+\nabla\cdot(\nabla\psi\rho^*)=0, \quad \rho^*(x,\tstart)=\rho_0(x),
\end{equation}
where $\psi$ is a solution of the Hamilton-Jacobi equation
\begin{equation}\label{HJclass}
\frac{\partial \psi}{\partial t}+\frac{1}{2}\|\nabla\psi\|^2=0
\end{equation}
for some boundary condition $\psi(x,\tend)=\psi_1(x)$.
If $\rho^*(x,\tend)=\rho_1(x)$, then the pair $\left(\rho^*,\vv^*\right)$ with $\vv^*(x,t)=\nabla\psi(x,t)$ is a solution of (\ref{eq:BB}).
\end{proposition}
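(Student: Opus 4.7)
The plan is to exploit the Lagrangian framework already set up in equations \eqref{lagrangian}--\eqref{lagrangian3}, specializing the multiplier $\lambda$ to the given $\psi$. The first step is to observe that feasibility is immediate from the hypotheses: equation \eqref{optev} is precisely the continuity equation \eqref{BB2} for the pair $(\rho^*,v^*)$ with $v^*=\nabla\psi$, the initial condition $\rho^*(x,\tstart)=\rho_0(x)$ is built in, and the matching at $t=\tend$ is the standing assumption. So the whole content of the proposition is optimality.

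Next, for any admissible pair $(\rho,v)\in\mathcal P_{\rho_0\rho_1}\times\mathcal V$, I would evaluate the Lagrangian $\mathcal L(\rho,v)$ of \eqref{lagrangian} with $\lambda=\psi$. Because $(\rho,v)$ satisfies the continuity equation, the multiplier term integrates (after the integration by parts performed in \eqref{lagrangian2}) to just the boundary integral
\begin{equation*}
B:=\int_{\R^n}\bigl[\psi(x,\tend)\rho_1(x)-\psi(x,\tstart)\rho_0(x)\bigr]dx,
\end{equation*}
which is the same constant for every admissible pair. Thus the objective in \eqref{BB1} equals $\mathcal L(\rho,v)-B$ modulo the usual decay-at-infinity assumptions, and minimizing the objective over $\mathcal P_{\rho_0\rho_1}\times\mathcal V$ is equivalent to minimizing the expression \eqref{lagrangian3} over the same set.

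Now I would apply the pointwise inequality
\begin{equation*}
\tfrac12\|v(x,t)\|^2-\nabla\psi(x,t)\cdot v(x,t)\;\ge\;-\tfrac12\|\nabla\psi(x,t)\|^2,
\end{equation*}
with equality iff $v(x,t)=\nabla\psi(x,t)$. Integrating against $\rho(x,t)\,dt\,dx$ gives
\begin{equation*}
\int_{\R^n}\!\int_{\tstart}^{\tend}\!\!\Bigl[\tfrac12\|v\|^2-\tfrac{\partial\psi}{\partial t}-\nabla\psi\cdot v\Bigr]\rho\,dt\,dx
\;\ge\;\int_{\R^n}\!\int_{\tstart}^{\tend}\!\!\Bigl[-\tfrac{\partial\psi}{\partial t}-\tfrac12\|\nabla\psi\|^2\Bigr]\rho\,dt\,dx,
\end{equation*}
and the right-hand side vanishes identically for every $\rho\in\mathcal P_{\rho_0\rho_1}$ because $\psi$ satisfies the Hamilton-Jacobi equation \eqref{HJclass}. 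The left-hand side is exactly \eqref{lagrangian3}, so it is bounded below by $0$ for every admissible pair, and equality is achieved by $(\rho^*,\nabla\psi)$. Adding back the constant $B$ completes the optimality argument.

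The only real obstacle is regularity: the integration by parts and dropping of the boundary terms at $x\to\infty$ require sufficient decay of $\rho$, $v$, and $\psi$, and the Hamilton-Jacobi equation \eqref{HJclass} is well-known to develop singularities in finite time. For the sufficiency statement here it is enough to assume a $C^1$ solution $\psi$ on $[\tstart,\tend]\times\R^n$ with the necessary decay, as is tacitly done throughout the section; the pointwise minimization step is purely algebraic and causes no trouble.
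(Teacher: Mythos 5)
Your argument is correct and follows essentially the same route as the paper: set up the Lagrangian \eqref{lagrangian}, integrate by parts, drop the constant boundary term over $\mathcal P_{\rho_0\rho_1}$, use the completion-of-square/pointwise-minimization step to reduce to the Hamilton--Jacobi integrand, and conclude from feasibility of $(\rho^*,\nabla\psi)$. The only slip is the sentence claiming the objective equals $\mathcal L(\rho,v)-B$; for an admissible pair the continuity-equation term vanishes so the objective equals $\mathcal L(\rho,v)$ itself, which by \eqref{lagrangian2} is the expression \eqref{lagrangian3} plus $B$ --- the conclusion that minimizing the objective is equivalent to minimizing \eqref{lagrangian3} is still correct, since they differ by the constant $B$.
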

The stochastic nature of the Benamou-Brenier formulation \eqref{eq:BB} stems from the fact that initial and final densities are specified. Accordingly, the above requires solving a two-point boundary value problem and the resulting control dictates the local velocity field. In general, one cannot expect to have a classical solution of (\ref{HJclass}) and has to be content with a viscosity solution \cite{FS}. See \cite{TT} for a recent contribution in the case when only samples of $\rho_0$ and $\rho_1$ are known.

\section{Backround on Schr\"{o}dinger Bridges}\label{background}
\subsection{Finite energy diffusions}
We follow \cite{Jam,F2,W}. Let $\Omega:=C([\tstart,\tend],\R^n)$ denote the family of $n$-dimensional
continuous functions, let
$W_x$ denote Wiener measure on $\Omega$ starting at $x$ at $t=\tstart$, and let
$$W:=\int W_x\,dx
$$
be stationary Wiener measure. Let $\D$ be the family of
distributions on $\Omega$ that are equivalent to $W$. By Girsanov's theorem, under $Q\in\D$, the coordinate process $\xx(t,\omega)=\omega(t)$ admits the representations
\begin{eqnarray}\label{for}
d\xx(t)&=&\beta_+^Q dt+ dw_+(t), \quad \beta_+^Q \;{\rm is}\; {\cal F}_t^+ - {\rm adapted},\\
d\xx(t)&=&\beta_-^Q dt+ dw_-(t), \quad \beta_-^Q \;{\rm is}\; {\cal F}_t^- - {\rm adapted},
\label{bac}\end{eqnarray}
where ${\cal F}_t^+$ and ${\cal F}_t^-$ are $\sigma$- algebras of events observable up to time $t$ and from time $t$ on, respectively, { and $w_-$, $w_+$ are standard $n$-dimensional Wiener processes,} \cite{Foe}. Moreover,
$$Q\left[\int_{\tstart}^{\tend}\|\beta_+^Q\|^2dt <\infty\right]=Q\left[\int_{\tstart}^{\tend}\|\beta_-^Q\|^2dt <\infty\right]=1.
$$
For
$Q,P\in\D$, we define  the {\it relative
entropy } $H(Q,P)$ of $Q$ with respect to $P$ as
$$H(Q,P)=E_Q\left[\log\frac{dQ}{dP}\right].$$
It then follows from Girsanov's theorem that
{
\begin{subequations}
\begin{eqnarray}\label{RE1}
H(Q,P)&=&H(q_0,p_0)+E_Q\left[\int_{\tstart}^{\tend}\frac{1}{2}
\|\beta_+^Q-\beta_+^P\|^2dt\right]\\
&=&H(q_1,p_1)+E_Q
\left[\int_{\tstart}^{\tend}\frac{1}{2}
\|\beta_-^Q-\beta_-^P\|^2dt\right].\label{RE2}
\end{eqnarray}
\end{subequations}
}
Here { $q_0$, $q_1$ are the marginal densities of $Q$ at $\tstart$ and $\tend$, respectively. Similarly, $p_0$, $p_1$ are the marginal densities of $P$. Then, $\beta_+^Q$ and
$\beta_-^Q$ are the
forward and the backward drifts of $Q$, respectively, and similarly for $P$.} The sketch of the proof goes as follows:
\begin{equation}
\begin{split}
\frac{dQ}{dW}&=q_0(\xx(\tstart))\exp\left[\int_{\tstart}^{\tend}\beta_+^Q d\xx-\int_{\tstart}^{\tend}\frac{1}{2}\|\beta_+^Q\|^2dt\right],\quad Q \; {\rm a.s.},\\
\frac{dW}{dP}&=\frac{1}{p_0(\xx(\tstart))}\exp\left[-\int_{\tstart}^{\tend}\beta_+^P d\xx+\int_{\tstart}^{\tend}\frac{1}{2}\|\beta_+^P\|^2dt\right],\quad P \; {\rm a.s.}\Rightarrow Q \; {\rm a.s.}.
\end{split}
\end{equation}
Hence,
\begin{equation}
\begin{split}
\log\frac{dQ}{dP}&=\log\frac{q_0(\xx(\tstart))}{p_0(\xx(\tstart))}+\int_{\tstart}^{\tend}\left(\beta_+^Q-\beta_+^P\right) d\xx+\int_{\tstart}^{\tend}\frac{1}{2}\left(\|\beta_+^P\|^2-\|\beta_+^Q\|^2\right)dt,\quad Q \; {\rm a.s.}, \\
&=\log\frac{q_0(\xx(\tstart))}{p_0(\xx(\tstart))}+\int_{\tstart}^{\tend}\left(\beta_+^Q-\beta_+^P\right) dW^+_t+\int_{\tstart}^{\tend}\frac{1}{2}\|\beta_+^P-\beta_+^Q\|^2dt,\quad Q \; {\rm a.s.}.
\end{split}
\end{equation}
Taking $E_Q$ on both sides, one gets (\ref{RE1}) provided the stochastic integral has zero expectation. In general,
$$\int_{\tstart}^{\tend}\left(\beta_+^Q-\beta_+^P\right) dw_+(t)
$$
is only a local,  ${\cal F}_t^+$ -  martingale. In order to claim that it has zero expectation one needs to ``localize" \cite[p.36]{KS}. Similarly, one can show \eqref{RE2}.
\subsection{The Schr\"{o}dinger bridge problem}
Now let $\rho_0$ and
$\rho_1$ be two
everywhere positive probability densities. Let $\D(\rho_0,\rho_1)$
denote the set of distributions in $\D$ having the prescribed marginal
densities at $\tstart$
and $\tend$. Given $P\in\D$, we consider the following problem:

\begin{equation}\label{problem}{\rm Minimize}\quad H(Q,P) \quad {\rm over} \quad \D(\rho_0,\rho_1).
\end{equation}
 If there is at least one $Q$ in
$\D(\rho_0,\rho_1)$ such that
$H(Q,P)<\infty$,
there exists a unique minimizer $Q^*$ in
$\D(\rho_0,\rho_1)$ called
{\em the Schr\"{o}dinger bridge} from $\rho_0$ to $\rho_1$ over $P$. Indeed, let
$$P_x^y=P\left[\,\cdot\mid \xx(\tstart)=x,\xx(\tend)=y\right],\quad Q_x^y=Q\left[\,\cdot\mid\xx(\tstart)=x,\xx(\tend)=y\right]
$$
be the disintegrations of P and Q with respect to the initial and final positions. Let also
$$\mu^P=P\left[(\xx(\tstart),\xx(\tend))\in(\cdot)\right],\quad \mu^Q=Q\left[(\xx(\tstart),\xx(\tend))\in(\cdot)\right]
$$
be the joint initial-final time distributions under $P$ and $Q$, respectively. Then, we have
$$P=\int P_x^y(\cdot)\mu^P(dx,dy),\quad Q=\int Q_x^y(\cdot)\mu^Q(dx,dy).
$$
By the multiplication formula,
$$\frac{dQ}{dP}=\frac{d\mu^Q}{d\mu^P}(\xx(\tstart),\xx(\tend))\frac{dQ^{\xx(\tend)}_{\xx(\tstart)}}{dP^{\xx(\tend)}_{\xx(\tstart)}},\quad Q \; {\rm a.s.},
$$
we get
\begin{equation}
\begin{split}
H(Q,P)=E_Q\left[\log\frac{dQ}{dP}\right]=E_Q\left[\log\frac{d\mu^Q}{d\mu^P}(\xx(\tstart),\xx(\tend))\right]+E_Q\left[\log\frac{dQ^{\xx(\tend)}_{\xx(\tstart)}}{dP^{\xx(\tend)}_{\xx(\tstart)}}(x)\right]=\\ \int\left(\log\frac{d\mu^Q}{d\mu^P}\right)d\mu^Q+\int\int\left(\log\frac{dQ^{y}_{x}}{dP^{y}_{x}}\right)dQ^y_x \mu^Q(dx,dy).
\end{split}
\end{equation}
This is the sum of two nonnegative quantities. The second becomes zero if and only if
$$Q_x^y=P_x^y, \quad \mu^Q \;{\rm a.s.}.
$$
Thus, as already observed by Schr\"{o}dinger, the problem reduces to minimizing
\begin{equation}\int\left(\log\frac{d\mu^Q}{d\mu^P}\right)d\mu^Q
\end{equation}
subject to the (linear) constraints
\begin{equation}\mu^Q(dx\times\R^n)=\rho_0(x)dx,\quad \mu^Q(\R^n\times dy)=\rho_1(y)dy.
\end{equation}
By the results of Beurlin-Jamison-F\"{o}llmer, this problem has a unique solution $\mu^*$ and
$$Q^*=\int P_x^y(\cdot)\mu^*(dx,dy)
$$
solves (\ref{problem}).
\section{A stochastic control formulation}\label{SC}
Consider now the case where (the coordinate
process under) $P$ is a Markovian diffusion with forward drift field $b_+^P(x,t)$ and
transition density
$p(\sigma,x,\tau,y)$.  The one-time density $\rho(x,t)$ of $P$ is a weak solution of the Fokker-Planck equation
\begin{equation}\label{FP}
\frac{\partial\rho}{\partial t}+\nabla\cdot\left(b_+^P\rho\right)-\frac{1}{2}\Delta\rho=0.
\end{equation}
Moreover, forward and backward drifts are related through Nelson's relation \cite{N1}
\begin{equation}\label{duality}b_-^P(x,t)=b_+^P(x,t)-\nabla\log\rho(x,t).
\end{equation}
Then $Q^*$ is also Markovian with forward drift field
\begin{equation}\label{optdrift}b_+^{Q^*}(x,t)=b_+^P(x,t)+\nabla\log\varphi(x,t),
\end{equation}
 where the
(everywhere positive) function
$\varphi$ solves together
with another function $\hat{\varphi}$ the system
\begin{eqnarray}\label{SY1}
&&\varphi(t,x)=\int
p(t,x,\tend,y)\varphi(\tend,y)dy,\\&&\hat{\varphi}(t,x)=\int
p(\tstart,y,t,x)\hat{\varphi}(\tstart,y)dy.\label{SY2}
\end{eqnarray}
 with boundary conditions
$$\varphi(x,\tstart)\hat{\varphi}(x,\tstart)=\rho_0(x),\quad
\varphi(x,\tend)\hat{\varphi}(x,\tend)=\rho_1(x). $$
Moreover,  the one-time density $\tilde{\rho}$ of $Q^*$ satisfies  the factorization
\begin{equation}\label{factor}\tilde{\rho}(x,t)=\varphi(x,t)\hat{\varphi}(x,t), \forall t\in [\tstart,\tend].
\end{equation} Let us give an elementary derivation of (\ref{optdrift}). Let $\varphi(x,t)$ be any positive, space-time harmonic function, namely $\varphi$ satisfies on $\R^n\times [\tstart,\tend]$
\begin{equation}\label{sth}\frac{\partial\varphi}{\partial t}+b_+^P\cdot\nabla\varphi+\frac{1}{2}\Delta\varphi=0.
\end{equation}
It follows that $\log\varphi$ satisfies
\begin{equation}\label{logeq}
\frac{\partial\log\varphi}{\partial t}+b_+^P\cdot\nabla\log\varphi+\frac{1}{2}\Delta\log\varphi=-\frac{1}{2}\|\nabla\log\varphi\|^2.
\end{equation}
Observe now that, in view of (\ref{RE1}), problem (\ref{problem}) is equivalent to minimizing over $\D(\rho_0,\rho_1)$ the functional
\begin{equation}\label{funct}I(Q)=E_Q\left[\int_{\tstart}^{\tend}\frac{1}{2}
\|\beta_+^Q-b_+^P(\xx(t),t)\|^2dt -\log\varphi(\xx(\tend),\tend)+\log\varphi(\xx(\tstart),\tstart)\right].
\end{equation}
This follows from the fact that $H(Q,P)$ and (\ref{funct}) differ by a quantity which is constant over $\D(\rho_0,\rho_1)$. Observe now that, under $Q$, by Ito's rule,
\begin{equation}d\log\varphi(\xx(t),t)=\left[\frac{\partial\log\varphi}{\partial t}(\xx(t),t)+\beta_+^Q\cdot\nabla\log\varphi(\xx(t),t)+\frac{1}{2}\Delta\log\varphi(\xx(t),t)\right]dt+\nabla\log\varphi(\xx(t),t)dw_+(t).
\end{equation}
Using this and (\ref{logeq}) in (\ref{funct}), we now get
\begin{eqnarray}\nonumber
I(Q)&=& E_Q\left[\int_{\tstart}^{\tend}\frac{1}{2}
\|\beta_+^Q-b_+^P(\xx(t),t)\|^2dt -\log\varphi(\xx(\tend),\tend)+\log\varphi(\xx(\tstart),\tstart)\right]\\\nonumber
&=& E_Q\left[\int_{\tstart}^{\tend}\left(\frac{1}{2}
\|\beta_+^Q-b_+^P(x(t),t)\|^2-\left[\frac{\partial\log\varphi}{\partial t}+\beta_+^Q\cdot\nabla\log\varphi+\frac{1}{2}\Delta\log\varphi\right](\xx(t),t) \right)dt\right.\\\nonumber
&&\left.-\int_{\tstart}^{\tend}\nabla\log\varphi(\xx(t),t)dw_+(t)\right]\\\nonumber
&=&E_Q\left[\int_{\tstart}^{\tend}\left(\frac{1}{2}
\|\beta_+^Q-b_+^P(x(t),t)\|^2-\left(\beta_+^Q-b_+^P(x(t),t)\right)\cdot\nabla\log\varphi(\xx(t),t)+\frac{1}{2}\|\nabla\log\varphi(\xx(t),t)\|^2 \right)dt\right]\\
&=&E_Q\left[\int_{\tstart}^{\tend}\frac{1}{2}
\|\beta_+^Q-b_+^P(\xx(t),t)-\nabla\log\varphi(\xx(t),t)\|^2dt\right],
\end{eqnarray}
where again we assumed that the stochastic integral has zero expectation. Then the form (\ref{optdrift}) of the forward drift of $Q^*$ follows. Define now
$$\hat{\varphi}(x,t)=\frac{\tilde{\rho}(x,t)}{\varphi(x,t)}.
$$
Then a direct calculation using (\ref{sth}), and the Fokker-Planck equation satisfied by $\tilde{\rho}$
\begin{equation}\label{FP2}
\frac{\partial\tilde{\rho}}{\partial t}+\nabla\cdot\left((b_+^P+\nabla\log\varphi)\tilde{\rho}\right)-\frac{1}{2}\Delta\tilde{\rho}=0,
\end{equation}
yields
\begin{equation}\label{coharmonic}
\frac{\partial\hat{\varphi}}{\partial t}+ \nabla\cdot\left(b_+^P\hat{\varphi}\right)-\frac{1}{2}\Delta\hat{\varphi}=0.
\end{equation}
Thus, $\hat{\varphi}$ is co-harmonic, namely it satisfies the original Fokker-Planck equation (\ref{FP}) just like $\rho(x,t)$, the one-time density of the ``prior" $P$.

Suppose we start instead with $\psi(x,t)$, a positive, reverse-time space-time harmonic function, namely $\psi$ satisfies on $\R^n\times [\tstart,\tend]$
\begin{equation}\label{rtsth}\frac{\partial\psi}{\partial t}+b_-^P\cdot\nabla\psi-\frac{1}{2}\Delta\psi=0,
\end{equation}
where $b_-^P(x)=b_+^P(x)-\nabla\log\rho(x,t)$ is the backward drift of $P$. Then $\log\psi$ satisfies
\begin{equation}\label{backlogeq}
\frac{\partial\log\psi}{\partial t}+b_-^P\cdot\nabla\log\psi-\frac{1}{2}\Delta\log\psi=\frac{1}{2}\|\nabla\log\psi\|^2.
\end{equation}
Consider now the functional
\begin{equation}\label{backfunct}\bar{I}(Q)=E_Q\left[\int_{\tstart}^{\tend}\frac{1}{2}
\|\beta_-^Q-b_-^P(\xx(t),t)\|^2dt +\log\psi(\xx({\tend}),\tend)-\log\psi(\xx({\tstart}),\tstart)\right].
\end{equation}
Again, minimizing $\bar{I}(Q)$ over $\D(\rho_0,\rho_1)$ is equivalent to  (\ref{problem}). By Ito's rule, under $Q$, we have
\begin{equation}d\log\psi(\xx(t),t)=\left[\frac{\partial\log\psi}{\partial t}+\beta_-^Q\cdot\nabla\log\psi+\frac{1}{2}\Delta\log\psi\right](\xx(t),t)dt+\nabla\log\psi(\xx(t),t)dw_-(t).
\end{equation}
Using this differential in (\ref{backfunct}), we now get
\begin{eqnarray*}
\bar{I}(Q)&=& E_Q\left[\int_{\tstart}^{\tend}\frac{1}{2}
\|\beta_-^Q-b_-^P(\xx(t),t)\|^2dt +\log\psi(x(\tend),\tend)-\log\psi(x(\tstart),\tstart)\right] \\
&=& E_Q\left[\int_{\tstart}^{\tend}\left(\frac{1}{2}
\|\beta_-^Q-b_-^P(\xx(t),t)\|^2+\left[\frac{\partial\log\psi}{\partial t}+\beta_-^Q\cdot\nabla\log\psi-\frac{1}{2}\Delta\log\psi\right](\xx(t),t) \right)dt\right.\\
&&\left.+\int_{\tstart}^{\tend}\nabla\log\psi(\xx(t),t)dw_-(t)\right]\\
&=&E_Q\left[\int_{\tstart}^{\tend}\left(\frac{1}{2}
\|\beta_-^Q-b_-^P(\xx(t),t)\|^2+\left(\beta_-^Q-b_-^P(\xx(t),t)\right)\cdot\nabla\log\psi(\xx(t),t)+\frac{1}{2}\|\nabla\log\psi(\xx(t),t)\|^2 \right)dt\right]\\
&=&E_Q\left[\int_{\tstart}^{\tend}\frac{1}{2}
\|\beta_-^Q-b_-^P(\xx(t),t)+\nabla\log\psi(\xx(t),t)\|^2dt\right].
\end{eqnarray*}
We then get
\begin{equation}\label{rtopt}
b_-^{Q^*}(x,t)=b_-^P(x,t)-\nabla\log\psi(x,t).
\end{equation}
Thus, the solution $Q^*$ is, in the language of Doob, an $h$-path process both in the forward and in the backward direction of time. We now identify $\psi$. By (\ref{factor}), we have
\begin{eqnarray}\nonumber
\tilde{\rho}(x,t)&=&\varphi(x,t)\hat{\varphi}(x,t)=\varphi(x,t)\frac{\hat{\varphi}(x,t)}{\rho(x,t)}\rho(x,t)\\&=&\varphi(x,t)\psi(x,t)\rho(x,t),\quad\psi(x,t)=\frac{\hat{\varphi}(x,t)}{\rho(x,t)},\quad \forall t\in [\tstart,\tend].
\label{newfactor}
\end{eqnarray}
Indeed, $\psi$, being the ratio of two solutions of the original Fokker-Planck (\ref{coharmonic}), is reverse-time space-time harmonic, it namely satisfies (\ref{rtsth}) \cite{P89}. This agrees with the following calculation using (\ref{rtopt}), (\ref{duality}), (\ref{newfactor}) and (\ref{optdrift})
\begin{equation}
\begin{split}b_-^{Q^*}(x,t)&=b_-^P(x,t)-\nabla\log\psi(x,t)\\
                                        &=b_+^P(x,t)-\nabla\log\rho(x,t)-\nabla\log\psi(x,t)\\
                                        &=b_+^P(x,t)-\nabla\log\hat{\varphi}(x,t)=b_+^P(x,t)\pm\nabla\log\varphi(x,t)-\nabla\log\hat{\varphi}(x,t)\\
                                        &=b_+^{Q^*}(x,t)-\nabla\log\tilde{\rho}(x,t),
\end{split}
\end{equation}
which is simply Nelson's duality relation for the drifts of $Q^*$. Formula (\ref{newfactor}) should be compared to \cite[Theorem 3.4]{leo}.

Finally, there are also conditional versions of these variational problems which are closer to standard stochastic control problems. Consider, for instance,
minimizing the functional
\begin{equation}\label{controlfunct}J(u)=E_{tx}\left[\int_{t}^{\tend}\frac{1}{2}
\|u(t)\|^2dt -\log\varphi_1(\xx(\tend))\right],
\end{equation}
$$dx(t)=\left[b_+^P(\xx(t),t)+u(x(t),t)\right]dt+dw_+(t), \xx(t)=x \; {\rm a.s.}.
$$
over feedback controls $u$ such that the differential equation has a weak solution. If $\varphi(x,t)$ solves (\ref{sth}) with terminal condition $\varphi_1(x)$, then, the same argument as before shows that $u^*(x,t)=\nabla\log\varphi(x,t)$ is optimal and that $S(x,t)=-\log\varphi(x,t)={\rm inf}_u J(u)$ is the {\em value function} of the control problem. By (\ref{logeq}), the Hamilton-Jacobi-Bellman equation has the form
$$\frac{\partial S}{\partial t}+{\rm inf}_u\left[\left(b_+^P+u\right)\cdot\nabla S+\frac{1}{2}\|u\|^2\right]+\frac{1}{2}\Delta S=0, \quad S(x,\tend)=-\log\varphi_1(x).$$

\section{A time-symmetric formulation}\label{TSF}
Inspired by a paper by Nagasawa \cite{Nag}, we proceed to derive a control time-symmetric formulation of the bridge problem. For any $Q\in\D$, define the {\em current} and {\em osmotic} drifts
$$v^Q=\frac{\beta_+^Q+\beta_-^Q}{2},\quad u^Q=\frac{\beta_+^Q-\beta_-^Q}{2}.
$$
Then
$$\beta_+^Q=v^Q+u^Q,\quad \beta_-^Q=v^Q-u^Q.
$$
Observe that
\begin{eqnarray}
\nonumber H(Q,P)&=&\frac{1}{2}H(q(\tstart),p(\tstart))+\frac{1}{2}H(q(\tend),p(\tend))\\\nonumber&&+E_Q\left[\int_{\tstart}^{\tend}\frac{1}{4}
\|\beta_+^Q-\beta_+^P\|^2+\frac{1}{4}
\|\beta_-^Q-\beta_-^P\|^2dt\right]\\\label{symm}&=&\frac{1}{2}H(q(\tstart),p(\tstart))+\frac{1}{2}H(q(\tend),p(\tend))\nonumber\\&&+E_Q\left[\int_{\tstart}^{\tend}\frac{1}{2}
\|v^Q-v^P\|^2+\frac{1}{2}
\|u^Q-u^P\|^2dt\right].
\end{eqnarray}
Since {$H(q_0,p_0)$ and $H(q_1,p_1)$} are constant over $\D(\rho_0,\rho_1)$, it follows that the Schr\"{o}dinger bridge  $Q^*$ minimizes the sum of the two incremental kinetic energies.
Finally, we consider minimizing over $\D(\rho_0,\rho_1)$ the functional
$$I_s(Q)=\frac{1}{2}\left[I(Q)+\bar{I}(Q)\right].
$$
By the previous calculation, this is equivalent to  minimizing over $\D(\rho_0,\rho_1)$ the functional
\begin{equation}\label{symmfunct} E_Q\left[\int_{\tstart}^{\tend}\left(\frac{1}{2}
\|v^Q-v^P\|^2+\frac{1}{2}\|u^Q-u^P\|^2\right)dt -\frac{1}{2}\log\frac{\varphi}{\psi}(x(\tend),\tend)+\frac{1}{2}\log\frac{\varphi}{\psi}(x(\tstart),\tstart)\right].
\end{equation}
The following current and osmotic drifts make the functional equal to zero and are therefore optimal
\begin{eqnarray}\label{optv}v^{Q^*}(x,t)=v^P(x,t)+\frac{1}{2}\nabla\log\frac{\varphi}{\psi}(x,t),\\u^{Q^*}(x,t)=u^P(x,t)+\frac{1}{2}\nabla\log (\varphi\psi)(x,t),
\label{optu}
\end{eqnarray}
which agree with (\ref{optdrift}) and (\ref{rtopt}). A variational analysis with the two controls $v$ and $u$ can be developed along the lines of \cite[Sections III-IV]{P}.

\section{A fluid dynamic formulation of the Schr\"{o}dinger bridge problem}\label{FDFS}

Let us go back to the symmetric representation (\ref{symm}). In the case where the prior measure is $P=W$ stationary Wiener measure, we have $v^W=u^W=0$ \footnote{See \cite[pp. 7-8]{leo} for a justification of employing unbounded path measures in relative entropy problems.}. It basically corresponds to the situation where there is no prior information. Considering that the boundary relative entropies are constant, we get that the problem is equivalent to minimizing
$$\E\left\{\int_{\tstart}^{\tend}\left[\frac{1}{2}\|v\|^2+\frac{1}{2}\|u\|^2\right]dt\right\}
$$
over $\D(\rho_0,\rho_1)$. Let us restrict our search to Markovian processes and recall Nelson's duality formula relating the two drifts
\begin{equation}\label{Nelsonduality}u(x,t)=\frac{1}{2}\nabla\log\rho(x,t).
\end{equation}
where $u$ is the osmotic drift field, and the current drift field
$$v(x,t)=\frac{b_+(x,t)+b_-(x,t)}{2}.
$$
Then,
\begin{equation}\label{conteq}
\frac{\partial \rho}{\partial t}+\nabla\cdot(v\rho)=0.
\end{equation}
Thus, we get that the problem is equivalent to minimizing
\begin{eqnarray}\label{SBB1}&&\inf_{(\rho,v)}\int_{\R^n}\int_{\tstart}^{\tend}\left[\frac{1}{2}\|v(x,t)\|^2+\frac{1}{8}\|\nabla\log\rho(x,t)\|^2\right]\rho(t,x)dtdx,\\&&\frac{\partial \rho}{\partial t}+\nabla\cdot(v\rho)=0,\label{SBB2}\\&& \rho(\tstart,x)=\rho_0(x), \quad \rho(\tend,y)=\rho_1(y),\label{Sboundary}
\end{eqnarray}
which should be compared to (\ref{BB1})-(\ref{BB2})-(\ref{boundary}). We notice, in particular, that the two functionals differ by a term which is a multiple of the integral over time of the Fisher information functional
$$\int_{\R^n}\|\nabla\log\rho(x,t)\|^2\rho(t,x)dx.
$$
This unveils a relation between the two problems without zero noise limits \cite{Mik, leo2}.

Finally, we mention that a fluid dynamic problem concerning swarms of particles diffusing anisotropically with losses has been proposed and studied in \cite{CGP2}. It may or may not have a probabilistic counterpart as a Schr\"{o}dinger bridge problem.

\section{Optimal transport with a ``prior"}\label{OMTprior}

Considering the relation we have seen between the fluid dynamic versions of the optimal transport problem and the Schr\"{o}dinger bridge problem, one may wonder whether there exists a formulation of the former which allows for an  ``a priori" evolution like in the latter.  Relative entropy on path space does not work for zero-noise random evolutions as they are singular. Indeed, let $P_\epsilon$ and $Q_\epsilon$ be the measures on { $C([\tstart,\tend],\R^n)$} equivalent to stationary Wiener measure W with forward differentials
\begin{eqnarray}\nonumber
d\xx(t)&=&\beta_+^{P_\epsilon} dt+\sqrt{\epsilon}dw_+(t),\\
d\xx(t)&=&\beta_+^{Q_\epsilon} dt+\sqrt{\epsilon}dw_+(t).
\end{eqnarray}
Then, one can argue along the same lines as in Section \ref{background} that
{
$$H(Q_\epsilon,P_\epsilon)=H(q_0,p_0)+\E_{Q_\epsilon}\left[\int_{\tstart}^{\tend}\frac{1}{2\epsilon}\|\beta_+^{Q_\epsilon}-\beta_+^{P_\epsilon}\|^2dt\right].
$$}
For $\epsilon\searrow 0$, the relative entropy becomes infinite unless $Q_\epsilon=P_\epsilon$ \footnote{{ This calculation indicates that there may be a limit as $\epsilon\searrow 0$  of  ${\rm inf}\{\epsilon H(Q_\epsilon,P_\epsilon)\}$ and, hopefully, in suitable sense, of the minimizers. This is indeed the case, see \cite{Mik,leo2,leo} for a precise statement of limiting results.}}. We need therefore to take a different route, namely start with the following fluid dynamic control problem. Suppose we have two probability densities $\rho_0$ and $\rho_1$ and a flow of probability densities $\{\rho(x,t); \tstart\le t\le \tend\}$ satisfying
\begin{equation}\label{continuityeq}
\frac{\partial \rho}{\partial t}+\nabla\cdot(v\rho)=0,
\end{equation}
for some continuos vector field $v(\cdot,\cdot)$. We take (\ref{continuityeq}) as our ``prior" evolution and formulate the following problem. Let
\begin{subequations}\label{eq:priorBB}
\begin{eqnarray}\label{priorBB1}&&\inf_{(\tilde{\rho},\tilde{v})}\int_{\R^n}\int_{\tstart}^{\tend}\frac{1}{2}\|\tilde{v}(x,t)-v(x,t)\|^2\tilde{\rho}(t,x)dtdx,\\&&\frac{\partial \tilde{\rho}}{\partial t}+\nabla\cdot(\tilde{v}\tilde{\rho})=0,\label{priorBB2}\\&&\tilde{\rho}(\tstart,x)=\rho_0(x), \quad \tilde{\rho}(\tend,y)=\rho_1(y).\label{priorboundary}
\end{eqnarray}
\end{subequations}
Clearly, if the prior flow satisfies $\rho(x,\tstart)=\rho_0(x)$ and $\rho(x,\tend)=\rho_1(x)$, then it solves the problem and $\tilde{v}^*=v$. Moreover, the standard optimal transport problem is recovered when $v\equiv 0$, namely the prior evolution is constant in time.

Let us try to provide further motivation to study problem \eqref{eq:priorBB}. Consider the situation where a previous optimal transport problem \eqref{eq:BB} has been solved with boundary marginals $\bar{\rho}_0$ and $\bar{\rho}_1$ leading to the optimal velocity field $v(x,t)$. Here say $\bar{\rho}_0$ represent resources being produced to satisfy the demand $\bar{\rho}_1$. Suppose now new information becomes available showing that the actual resources available are distributed according to $\rho_0$ and the actual demand is distributed according to $\rho_1$. As we had already set up a transportation plan according to velocity field $v$, we seek to solve a new transport problem where the new evolution is close to the one we would have employing the previous velocity field. This is represented in problem \eqref{eq:priorBB}.
{
\begin{remark}
The particle version of \eqref{eq:priorBB} takes the form of a more familiar OMT problem, namely, in the notation of Section \ref{BB},
\begin{equation}\label{GenOptTrans}
\inf_{\pi\in\Pi(\mu,\nu)}\int_{\R^n\times\R^n}c(x,y)d\pi(x,y),
\end{equation}
where
\begin{equation}\label{cost}
c(x,y)=\inf_{\x\in\mathcal X_{xy}}\int_{\tstart}^{\tend}L(t,x(t),\dot{x}(t))dt,\quad L(t,x,\dot{x})=\|\dot{x}-v(x,t)\|^2.
\end{equation}
The explicit calculation of the function $c(x,y)$ when $v\not\equiv 0$ is nontrivial. Moreover, the zero noise limit results of \cite[Section 3]{leo2}, based on a Large Deviations Principle \cite{DZ}, although very general in other ways, seem to cover here only the case where $c(x,y)=c(x-y)$ strictly convex originating from a Lagrangian $L(t,x,\dot{x})=c(\dot{x})$. Finally, we feel that our formulation is a most natural one in which to study zero noise limits of Schroedinger bridges with a general Markovian prior evolution. In the next section, we discuss this problem in the Gaussian case. The proof of the convergence of the path-space measures of the minimisers can be done along the lines of \cite{leo2} where $\Gamma$-convergence of the bridge minimum problems to the OMT problem is established. This, under suitable assumptions, guarantees convergence of the minimizers.
\end{remark}}

The variational analysis for \eqref{eq:priorBB} can be carried out as in Section \ref{BB}. Let $\mathcal P_{\rho_0\rho_1}$ be again the family of flows of probability densities $\rho=\{\rho(\cdot,t); \tstart\le t\le \tend\}$ satisfying (\ref{priorboundary}). Let $\mathcal V$ be the family of continuous feedback control laws $\tilde{v}(\cdot,\cdot)$. Consider the unconstrained minimization of the Lagrangian over $\mathcal P_{\rho_0\rho_1}\times\mathcal V$
\begin{equation}\label{priorlagrangian}
\mathcal L(\tilde{\rho},v)=\int_{\R^n}\int_{\tstart}^{\tend}\left[\frac{1}{2}\|\tilde{v}(x,t)-v(x,t)\|^2\tilde{\rho}(t,x)+\lambda(x,t)\left(\frac{\partial \tilde{\rho}}{\partial t}+\nabla\cdot(\tilde{v}\tilde{\rho})\right)\right]dtdx,
\end{equation}
where again $\lambda$ is a $C^1$ Lagrange multiplier. After integration by parts, assuming that limits for $x\rightarrow\infty$ are zero, and observing that the boundary values are constant over $\mathcal P_{\rho_0\rho_1}$, we get the problem
\begin{equation}\label{priorlagrangian2}\inf_{(\tilde{\rho},\tilde{v})\in\mathcal P_{\rho_0\rho_1}\times\mathcal V}\int_{\R^n}\int_{\tstart}^{\tend}\left[\frac{1}{2}\|\tilde{v}(x,t)-v(x,t)\|^2+\left(-\frac{\partial \lambda}{\partial t}-\nabla\lambda\cdot \tilde{v}\right)\right]\tilde{\rho}(x,t)dtdx
\end{equation}
Pointwise minimization with respect to $\tilde{v}$ for each fixed flow of probability densities $\tilde{\rho}$ gives
\begin{equation}\label{prioroptcond}
v^*_\rho(x,t)=v(x,t)+\nabla\lambda(x,t).
\end{equation}
Plugging this form of the optimal control into (\ref{priorlagrangian2}), we get the functional of $\tilde{\rho}\in\mathcal P_{\rho_0\rho_1}$
\begin{equation}\label{priorlagrangian3}
J(\tilde{\rho})=-\int_{\R^n}\int_{\tstart}^{\tend}\left[\frac{\partial \lambda}{\partial t}+v\cdot\nabla\lambda+\frac{1}{2}\|\nabla\lambda\|^2\right]\tilde{\rho}(x,t)dtdx.
\end{equation}
We then have the following result:
\begin{proposition}If $\tilde{\rho}^*$ satisfying
\begin{equation}\label{prioroptev}
\frac{\partial \tilde{\rho}^*}{\partial t}+\nabla\cdot[(v+\nabla\psi)\tilde{\rho}^*]=0, \quad \tilde{\rho}^*(x,\tstart)=\rho_0(x),
\end{equation}
where $\psi$ is a solution of the Hamilton-Jacobi equation
\begin{equation}\label{eq:hamiltonjacobi}
\frac{\partial \psi}{\partial t}+v\cdot\nabla\psi+\frac{1}{2}\|\nabla\psi\|^2=0,
\end{equation}
is such that $\tilde{\rho}^*(x,\tend)=\rho_1(x)$, then the pair $\left(\tilde{\rho}^*(x,t),v^*(x,t)=v(x,t)+\nabla\psi(x,t)\right)$ is a solution of the problem \eqref{eq:priorBB}.
\end{proposition}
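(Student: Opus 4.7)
I would prove this by adapting the variational argument sketched earlier for Benamou--Brenier, exploiting the fact that adding the Hamilton--Jacobi term $\partial_t\psi + v\cdot\nabla\psi + \tfrac{1}{2}\|\nabla\psi\|^2 = 0$ allows us to complete the square inside the cost integral, after which the feasibility (continuity equation and boundary conditions) converts everything into a constant term plus a manifestly nonnegative term.

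The plan is to show directly that for any competitor $(\tilde{\rho},\tilde{v}) \in \mathcal{P}_{\rho_0\rho_1}\times\mathcal{V}$ satisfying \eqref{priorBB2}--\eqref{priorboundary}, the cost exceeds (or at worst equals) that of $(\tilde{\rho}^*,\,v+\nabla\psi)$. The key algebraic identity is the completion of squares
\begin{equation*}
\tfrac{1}{2}\|\tilde{v}-v\|^2 \;=\; \tfrac{1}{2}\|\tilde{v}-v-\nabla\psi\|^2 \;+\; \tilde{v}\cdot\nabla\psi \;+\;\bigl(\partial_t\psi + v\cdot\nabla\psi\bigr) \;-\;\partial_t\psi \;+\;\tfrac{1}{2}\|\nabla\psi\|^2 - \tfrac{1}{2}\|\nabla\psi\|^2,
\end{equation*}
which, upon invoking \eqref{eq:hamiltonjacobi}, simplifies to $\tfrac{1}{2}\|\tilde{v}-v-\nabla\psi\|^2 + \tilde{v}\cdot\nabla\psi + \partial_t\psi$.

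Next, I would multiply by $\tilde{\rho}$, integrate over $\R^n\times[\tstart,\tend]$, and handle the linear term in $\psi$ using integration by parts against the continuity equation \eqref{priorBB2} satisfied by $(\tilde{\rho},\tilde{v})$: assuming the usual decay at infinity,
\begin{equation*}
\int_{\R^n}\int_{\tstart}^{\tend}\bigl[\partial_t\psi + \tilde{v}\cdot\nabla\psi\bigr]\tilde{\rho}\,dt\,dx \;=\; \int_{\R^n}\psi(x,\tend)\rho_1(x)\,dx - \int_{\R^n}\psi(x,\tstart)\rho_0(x)\,dx,
\end{equation*}
which is a constant $C(\psi;\rho_0,\rho_1)$ depending only on the fixed boundary data and on $\psi$. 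Therefore the cost functional of \eqref{priorBB1} equals
\begin{equation*}
\int_{\R^n}\int_{\tstart}^{\tend}\tfrac{1}{2}\|\tilde{v}-v-\nabla\psi\|^2\,\tilde{\rho}(x,t)\,dt\,dx \;+\; C(\psi;\rho_0,\rho_1),
\end{equation*}
whose first term is nonnegative and vanishes precisely when $\tilde{v}=v+\nabla\psi$ on $\{\tilde{\rho}>0\}$. By hypothesis $(\tilde{\rho}^*,\,v+\nabla\psi)$ is feasible (it satisfies \eqref{prioroptev} plus the endpoint constraint $\tilde{\rho}^*(x,\tend)=\rho_1(x)$), so it achieves the value $C(\psi;\rho_0,\rho_1)$ and is therefore optimal.

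The only delicate step is the integration by parts, which tacitly requires enough regularity and decay to justify dropping the spatial boundary terms and using $\psi$ as a test function against the weak continuity equation; this is exactly the kind of hypothesis already made implicit in the authors' derivation of \eqref{priorlagrangian2}, so I would adopt the same standing assumptions. The existence of a classical solution of the Hamilton--Jacobi equation \eqref{eq:hamiltonjacobi} matching the terminal distribution $\rho_1$ through \eqref{prioroptev} is not asserted here --- the Proposition simply states that whenever such a $\psi$ exists, the resulting pair is optimal --- so I do not need to construct $\psi$, only use it.
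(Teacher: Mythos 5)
Your proof is correct and is essentially the same argument the paper makes via the Lagrangian in \eqref{priorlagrangian}--\eqref{priorlagrangian3}: the pointwise minimization over $\tilde v$ there is precisely your completion of squares, and the Hamilton--Jacobi equation \eqref{eq:hamiltonjacobi} is what annihilates the $\tilde\rho$-dependent remainder so that the cost reduces to a constant plus $\int\int\tfrac12\|\tilde v - v - \nabla\psi\|^2\tilde\rho$, with the candidate pair achieving the constant. One small caution: your displayed intermediate identity does not balance as written (it reduces to $\tilde v\cdot\nabla\psi + v\cdot\nabla\psi$ on the right rather than $(\tilde v - v)\cdot\nabla\psi - \tfrac12\|\nabla\psi\|^2$), although the simplified form $\tfrac12\|\tilde v - v - \nabla\psi\|^2 + \tilde v\cdot\nabla\psi + \partial_t\psi$ that you actually use afterward is correct.
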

If $v(x,t)=\alpha(t)x$, and both $\rho_0$ and $\rho_1$ are Gaussian, then the optimal evolution is given by a linear equation and is therefore given by a Gaussian process as we will study next.

\section{Gaussian case}\label{gaussian}
In this section, we consider the correspondence between Schr\"{o}dinger bridges and optimal mass transport for the special case where the underlying dynamics are linear and the marginals are normal distributions. To this end, consider { the reference evolution}
	\begin{equation}\label{eq:dynamicslinear}
		d\xx(t)=A(t)\xx(t)dt+\sqrt{\epsilon}dw(t)
	\end{equation}
and the two marginals
	\begin{subequations}\label{eq:marginalslinear}
	\begin{equation}
		\rho_0(x)=(2\pi|\Sigma_0|)^{-n/2}\exp\left[-\frac{1}{2}(x-m_0)'\Sigma_0^{-1}(x-m_0)\right],
	\end{equation} 		
	\begin{equation}
		\rho_1(x)=(2\pi|\Sigma_1|)^{-n/2}\exp\left[-\frac{1}{2}(x-m_1)'\Sigma_1^{-1}(x-m_1)\right],
	\end{equation}
	\end{subequations}
{ where prime denotes transposition.} In our previous work \cite{CGP}, we derived a ``closed form'' expression for the corresponding Schr\"{o}dinger bridge for the case when $m_0=m_1=0$, namely,
	\begin{equation}\label{eq:schrodingerbridgelinear1}
		d\xx(t)=(A(t)-\Pi_\epsilon(t))\xx(t) dt+\sqrt{\epsilon}dw(t)
	\end{equation}
with $\Pi_\epsilon(t)$ satisfying the matrix Riccati equation
	\begin{equation}\label{eq:schrodingerbridgefeedback}
		\dot{\Pi}_\epsilon(t)+A(t)'\Pi_\epsilon(t)+\Pi_\epsilon(t)A(t)-\Pi_\epsilon(t)^2=0
	\end{equation}
and the boundary condition
	\[
        \Pi_\epsilon(0)=\Sigma_0^{-1/2}[\frac{\epsilon}{2}I+
	\Sigma_0^{1/2}\Phi_{10}^\prime M_{10}^{-1}\Phi_{10}\Sigma_0^{1/2}-
	(\frac{\epsilon^2}{4}I+\Sigma_0^{1/2}\Phi_{10}^\prime M_{10}^{-1}\Sigma_1M_{10}^{-1}
	\Phi_{10}\Sigma_0^{1/2}
	)^{1/2}]\Sigma_0^{-1/2}.
        \]
Here $\Phi_{10}:=\Phi(\tend,\tstart)$ is the state transition matrix from $\tstart$ to $\tend$ and
	\[
		M_{10}:=M(\tend,\tstart)=\int_{\tstart}^{\tend}\Phi(\tend,t)\Phi(\tend,t)'dt
	\]
is the controllability gramian. This can be easily adjusted for the case when $m_0\neq 0$ or  $m_1\neq 0$ by adding an extra deterministic drift term to account for the change in the mean as follows:
	\begin{equation}\label{eq:schrodingerbridgelinear2}
		d\xx(t)=(A(t)-\Pi_\epsilon(t))\xx(t) dt+m(t)dt+\sqrt{\epsilon}dw(t)
	\end{equation}
where
	\begin{equation}\label{eq:schrodingerbridgedrift}
		m(t)=\hat{\Phi}(\tend,t)'\hat{M}(\tend,\tstart)^{-1}(m_1-\hat{\Phi}(\tend,\tstart)m_0)
	\end{equation}
with $\hat{\Phi}(t,s), \hat{M}(t,s)$ satisfying
	\begin{equation*}
		\frac{\partial \hat{\Phi}(t,s)}{\partial t} =(A(t)-\Pi_\epsilon(t))\hat{\Phi}(t,s),~~~\hat{\Phi}(t,t)=I
	\end{equation*}
and
	\[
		\hat{M}(t,s)=\int_s^t \hat{\Phi}(t,\tau)\hat{\Phi}(t,\tau)'d\tau.
	\]
We now consider ``slowing down'' the reference evolution
by letting $\epsilon$ go to $0$. In the case where $A(t)\equiv 0$, the { Schr\"{o}dinger bridge solution }process converges to the solution of optimal mass transport problem ~\cite{Mik,leo}. In general, { when $A(t)\not\equiv 0$,} by taking $\epsilon= 0$ we obtain
	\begin{equation}\label{eq:schrodingerbridgeinitial}
		\Pi_0(0)=\Sigma_0^{-1/2}[\Sigma_0^{1/2}\Phi_{10}^\prime M_{10}								^{-1}\Phi_{10}\Sigma_0^{1/2}-(\Sigma_0^{1/2}\Phi_{10}^\prime M_{10}		^{-1}\Sigma_1M_{10}^{-1}\Phi_{10}\Sigma_0^{1/2})^{1/2}]\Sigma_0^{-1/2},
	\end{equation}
and a limiting process
	\begin{equation}\label{eq:optimaltransportprior}
		dx(t)=(A(t)-\Pi_0(t))x(t) dt+m(t)dt,~~x(\tstart)\sim (m_0,\Sigma_0)
	\end{equation}
with $\Pi_0(t), m(t)$ satisfying \eqref{eq:schrodingerbridgefeedback}, \eqref{eq:schrodingerbridgedrift} and \eqref{eq:schrodingerbridgeinitial}. In fact $\Pi_0(t)$ has explicit expression
    \begin{eqnarray}
        \nonumber
        \Pi_0(t)&=&-M(t,\tstart)^{-1}-M(t,\tstart)^{-1}\Phi(t,\tstart)\left[\Phi_{10}^\prime M_{10}^{-1}\Phi_{10}\right.
        \\
        &&
        \left.-\Sigma_0^{-1/2}(\Sigma_0^{1/2}\Phi_{10}^\prime M_{10}^{-1}\Sigma_1M_{10}^{-1}\Phi_{10}\Sigma_0^{1/2})^{1/2}\Sigma_0^{-1/2}-
        \Sigma_0^{-1/2}\right]^{-1}\Phi(t,\tstart)'M(t,\tstart)^{-1}
        \label{eq:feedbackPi}
    \end{eqnarray}
It turns out that process \eqref{eq:optimaltransportprior} yields an optimal solution to the transport problem \eqref{eq:priorBB} as stated next.
\begin{theorem}
Let $\tilde{\rho}(\cdot,t)$ be the probability density of $x(t)$ in \eqref{eq:optimaltransportprior}, and $\tilde{v}(x,t)=(A(t)-\Pi_0(t))x+m(t)$. Then the pair $(\tilde{\rho},\tilde{v})$ is a solution of the problem \eqref{eq:priorBB} with prior { velocity field $v(x,t)=A(t)x$.}

\end{theorem}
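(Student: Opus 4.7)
The plan is to invoke the Proposition stated just above (the sufficient optimality condition for problem \eqref{eq:priorBB}). Concretely, I want to exhibit a $C^1$ function $\psi(x,t)$ satisfying the Hamilton--Jacobi equation \eqref{eq:hamiltonjacobi} with prior velocity field $v(x,t)=A(t)x$, such that $\tilde{v}(x,t)=v(x,t)+\nabla\psi(x,t)$, and then verify that $\tilde{\rho}$ solves the continuity equation \eqref{priorBB2} together with the boundary conditions \eqref{priorboundary}.

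Since $\tilde{v}(x,t)-A(t)x=-\Pi_0(t)x+m(t)$ is affine in $x$, and $\Pi_0(t)$ remains symmetric along \eqref{eq:schrodingerbridgefeedback} (the boundary value \eqref{eq:schrodingerbridgeinitial} is manifestly symmetric and the Riccati flow preserves symmetry), the natural ansatz is
$$\psi(x,t)=-\tfrac{1}{2}x'\Pi_0(t)x+m(t)'x+c(t),$$
so that $\nabla\psi=-\Pi_0(t)x+m(t)$ and $\tilde{v}=v+\nabla\psi$ by inspection. Plugging this into \eqref{eq:hamiltonjacobi} and matching coefficients of the quadratic, linear, and constant terms in $x$ reduces the HJ equation to three conditions: the quadratic part is precisely the Riccati equation \eqref{eq:schrodingerbridgefeedback}; the linear part reads $\dot{m}(t)=-(A(t)-\Pi_0(t))'m(t)$, which I would verify directly from \eqref{eq:schrodingerbridgedrift} together with the identity $\partial_t\hat{\Phi}(\tend,t)=-\hat{\Phi}(\tend,t)(A(t)-\Pi_0(t))$; and the scalar part is a trivial ODE that merely fixes $c(t)$. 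Hence $\psi$ solves \eqref{eq:hamiltonjacobi}.

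It remains to check that $\tilde{\rho}$ satisfies the continuity equation with the prescribed marginals. Since \eqref{eq:optimaltransportprior} is a deterministic affine-linear ODE (the noise term has been set to zero), the law $\tilde{\rho}$ satisfies \eqref{priorBB2} automatically, and $\tilde{\rho}(\cdot,\tstart)=\rho_0$ by the initial condition on $x(\tstart)$. Linearity of the drift preserves Gaussianity of $\tilde{\rho}(\cdot,t)$, so terminal matching reduces to checking that the mean $\bar{m}(\tend)$ equals $m_1$ and the covariance $\bar{\Sigma}(\tend)$ equals $\Sigma_1$. The mean matching is a one-line calculation: integrating $\dot{\bar{m}}=(A-\Pi_0)\bar{m}+m$, substituting the explicit form \eqref{eq:schrodingerbridgedrift} of $m(\tau)$, and invoking $\hat{M}_{10}=\int_\tstart^\tend\hat{\Phi}(\tend,\tau)\hat{\Phi}(\tend,\tau)'d\tau$ collapses the integral to $m_1-\hat{\Phi}_{10}m_0$, which cancels against $\hat{\Phi}_{10}m_0$.

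The main obstacle is the covariance matching $\bar{\Sigma}(\tend)=\Sigma_1$, which is precisely the purpose of the specific boundary value \eqref{eq:schrodingerbridgeinitial} for $\Pi_0(\tstart)$. I would handle this by inheriting it from \cite{CGP} in the zero-noise limit: for each $\epsilon>0$ the Schr\"{o}dinger bridge \eqref{eq:schrodingerbridgelinear2} has terminal marginal $\rho_1$ by construction, and passage to $\epsilon\downarrow 0$ in the continuous dependence of the Lyapunov-equation solution on the parameters yields the same identity for the zero-noise covariance trajectory. A direct algebraic verification via the explicit representation \eqref{eq:feedbackPi} together with the integrated Lyapunov equation is also feasible but substantially more laborious. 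Once the terminal covariance identity is in hand, the preceding Proposition applies and yields optimality of $(\tilde{\rho},\tilde{v})$.
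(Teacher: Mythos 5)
Your outline is correct and shares the same skeleton as the paper's proof (same quadratic-affine ansatz $\psi(x,t)=-\tfrac{1}{2}x'\Pi_0(t)x+m(t)'x+c(t)$, same decomposition of the HJ verification into the Riccati equation for the quadratic part, the adjoint ODE $\dot m = -(A-\Pi_0)'m$ for the linear part, and a scalar ODE fixing $c$, and the same one-line collapse for the terminal mean). The genuine difference is in how you handle the terminal covariance identity $\bar\Sigma(\tend)=\Sigma_1$. The paper writes down an explicit closed form for $\Sigma(t)$ (eq.\ \eqref{eq:statecovariance}), verifies by direct (``straightforward but lengthy'') algebra that it solves the Lyapunov ODE generated by $A-\Pi_0$, and then evaluates at $t=\tend$; this is self-contained but computational. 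You instead inherit the identity from the $\epsilon>0$ Schr\"odinger bridges of \cite{CGP}: since $\Sigma_\epsilon(\tend)=\Sigma_1$ for every $\epsilon>0$ by construction, and since $\Pi_\epsilon(\tstart)\to\Pi_0(\tstart)$ and the Riccati and Lyapunov flows depend continuously on initial data and parameters, one gets $\bar\Sigma(\tend)=\lim_{\epsilon\downarrow 0}\Sigma_\epsilon(\tend)=\Sigma_1$. This buys you a shorter, more conceptual argument that never touches the explicit representations \eqref{eq:feedbackPi}--\eqref{eq:statecovariance}, at the cost of importing the $\epsilon>0$ well-posedness result and needing to note (as you implicitly do) that the $\epsilon=0$ Riccati solution exists on the whole interval so that continuous dependence applies uniformly on $[\tstart,\tend]$. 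Both routes are valid; the paper's is more robust for a reader who wants the $\epsilon=0$ problem treated on its own terms, while yours makes the ``zero-noise limit'' theme of the section structurally visible inside the proof.
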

\begin{proof}
To show that the pair $(\tilde{\rho},\tilde{v})$ is a solution, we need to prove i) $\tilde{\rho}$ satisfies the boundary condition $\tilde{\rho}(x,\tend)=\rho_1(x)$ and ii) $\tilde{v}(x,t)-v(x,t)=\nabla \psi(x,t)$ for some $\psi$ with $\psi$ satisfying the Hamilton-Jacobi equation \eqref{eq:hamiltonjacobi}. Here $v(x,t)=A(t)x$ is the drift of the prior process.

We first show that $\tilde{\rho}$ satisfies the boundary condition $\tilde{\rho}(x,\tend)=\rho_1(x)$. Since the process \eqref{eq:optimaltransportprior} is a linear diffusion with gaussian initial condition, $x(t)$ is a gaussian random vector for all $t\in [\tstart,\tend]$. Let
    \[
        \tilde{\rho}(x,t)=(2\pi|\Sigma(t)|)^{-n/2}\exp\left[-\frac{1}{2}(x-n(t))'\Sigma(t)^{-1}(x-n(t))\right].
    \]
Then obviously the mean value $n(t)$ is
    \[
        n(t)=\hat{\Phi}(t,\tstart)m_0+\int_{\tstart}^{t}\hat{\Phi}(t,\tau)m(\tau)d\tau.
    \]
We claim that the covariance $\Sigma(t)$ has the explicit expression
    \begin{eqnarray}
        \nonumber
        \Sigma(t)&=&M(t,\tstart)\Phi(\tstart,t)'\Sigma_0^{-1/2}\left[-\Sigma_0^{1/2}\Phi_{10}^\prime M_{10}^{-1}\Phi_{10}\Sigma_0^{1/2}+(\Sigma_0^{1/2}\Phi_{10}^\prime M_{10}^{-1}\Sigma_1M_{10}^{-1}\Phi_{10}\Sigma_0^{1/2})^{1/2}\right.\\
        &&\left.+\Sigma_0^{1/2}\Phi(t,\tstart)'
        M(t,\tstart)^{-1}\Phi(t,\tstart)\Sigma_0^{1/2}\right]^2 \Sigma_0^{-1/2}\Phi(\tstart,t)M(t,\tstart),
        \label{eq:statecovariance}
    \end{eqnarray}
for $t\in(\tstart,\tend]$. This expression is consistent with the initial condition $\Sigma_0$ since
    \[
        \lim_{t\searrow \tstart}\Sigma(t)=\Sigma_0.
    \]
To see that $\Sigma(t)$ is the covariance matrix of $x(t)$, we only need to show that $\Sigma(t)$ satisfies the differential equation
    \[
        \dot{\Sigma}(t)=(A(t)-\Pi_0(t))\Sigma(t)+\Sigma(t)(A(t)-\Pi_0(t))'.
    \]
This can be verified directly from the expression \eqref{eq:statecovariance} and \eqref{eq:feedbackPi} after  some straightforward but lengthy computations. Now observing that
    \begin{eqnarray*}
        n(\tend)&=&\hat{\Phi}(\tend,\tstart)m_0+\int_{\tstart}^{\tend}\hat{\Phi}(\tend,\tau)m(\tau)d\tau\\
        &=&\hat{\Phi}(\tend,\tstart)m_0+\int_{\tstart}^{\tend}\hat{\Phi}(\tend,\tau)\hat{\Phi}(\tend,\tau)'
        d\tau\hat{M}(\tend,\tstart)^{-1}(m_1-\hat{\Phi}(\tend,\tstart)m_0)=m_1
    \end{eqnarray*}
and
    \begin{eqnarray*}
        \Sigma(\tend)&=&M(\tend,\tstart)\Phi(\tstart,\tend)'\Sigma_0^{-1/2}\left[(\Sigma_0^{1/2}\Phi_{10}^\prime M_{10}^{-1}\Sigma_1M_{10}^{-1}\Phi_{10}\Sigma_0^{1/2})^{1/2}\right]^2\Sigma_0^{-1/2}\Phi(\tstart,\tend)M(\tend,\tstart)=\Sigma_1,
    \end{eqnarray*}
we conclude that $\tilde{\rho}$ satisfies the boundary condition $\tilde{\rho}(x,\tend)=\rho_1(x)$.

We next show ii). Let
    \[
        \psi(x,t)=-\frac{1}{2}x'\Pi_0(t)x+m(t)'x+c(t)
    \]
with
    \[
        c(t)=-\frac{1}{2}\int_{\tstart}^{t}m(\tau)'m(\tau)d\tau,
    \]
then
    \begin{eqnarray*}
    \frac{\partial \psi}{\partial t}+v\cdot\nabla\psi+\frac{1}{2}\|\nabla\psi\|^2
    &=& -\frac{1}{2}x'\dot{\Pi}_0(t)x+\dot{m}(t)'x+\dot{c}(t)+x'A(t)'(m(t)-\Pi_0(t)x)+\frac{1}{2}\|m(t)'-x'\Pi_0(t)\|^2\\
    &=&\frac{1}{2}x'(A(t)'\Pi_0+\Pi_0 A(t)-\Pi_0(t)^2)x-m(t)'(A(t)-\Pi_0(t))x+\dot{c}(t)\\
    &&+x'A(t)'(m(t)-\Pi_0(t)x)+\frac{1}{2}(m(t)'-x'\Pi_0(t))(m(t)-\Pi_0(t)x)\\
    &=& \dot{c}(t)+\frac{1}{2}m(t)'m(t)=0.
    \end{eqnarray*}
This completes the proof.
\end{proof}

\section{Example: Shifting the mean of normal distributions}\label{shifting}
{For illustration purposes, }we consider the Schr\"{o}dinger bridge problem on the time interval $[0,1]$ and $x\in\R$ when the ``prior" is $\sigma W_t$ and the two marginals are
\begin{equation}\label{marginals}
\rho_0(x)=(2\pi)^{-1/2}\exp\left[-\frac{x^2}{2}\right], \quad \rho_1(x)=(2\pi)^{-1/2}\exp\left[-\frac{(x-1)^2}{2}\right].
\end{equation}
By the general theory, we know that the bridge has forward differential
$$dx(t)=\frac{\partial}{\partial x}\log\varphi(x(t),t)dt+\sigma dw(t)
$$
where $\varphi$ solves together with $\hat{\varphi}$ the Schr\"{o}dinger system
\begin{equation}\label{Schroedingersystem}\left\{\begin{array}{ll}\frac{\partial \varphi}{\partial t}+\frac{\sigma^2}{2}\Delta\varphi=0, \quad \varphi(x,0)\hat{\varphi}(x,0)=\rho_0(x)\\\frac{\partial \hat{\varphi}}{\partial t}-\frac{\sigma^2}{2}\Delta\hat{\varphi}=0,  \quad \varphi(x,1)\hat{\varphi}(x,1)=\rho_1(x).
\end{array}\right.,
\end{equation}
{It can be seen that}
$$dx(t)=\left[\frac{\sigma^2}{\sigma^2t+c}x(t)+\frac{c}{\sigma^2t+c}\right]dt+\sigma dw(t),
$$
 with
	\[
		c=-\frac{\sigma^2}{\sigma^2/2+1-\sqrt{1+\sigma^4/4}}.
	\]
It follows that $m_t=\E\left\{x(t)\right\}$ satisfies
$$\dot{m}_t=\frac{\sigma^2}{\sigma^2t+c}m_t+\frac{c}{\sigma^2t+c},\quad m(0)=0, \quad m(1)=1.
$$
We get $m_t=t$.
The current drift of the Schr\"{o}dinger bridge is
$$\tilde{v}(x,t)=\frac{\sigma^2}{\sigma^2t+c}x+\frac{c}{\sigma^2t+c}-\frac{\sigma^2}{2}\nabla\log\tilde{\rho}_t(x),
$$
where $\tilde{\rho}_t$ has the form
$$\tilde{\rho}_t=(2\pi)^{-1/2}\exp\left[-\frac{(x-t)^2}{2q(t)}\right].
$$
Hence,
$$\tilde{v}(x,t)=\frac{\sigma^2}{\sigma^2t+c}x+\frac{c}{\sigma^2t+c}-\frac{\sigma^2}{2}\frac{t-x}{q(t)}.
$$
{ As $\sigma^2\searrow 0$, $c\rightarrow -2$, $q(t)\rightarrow 1$} and $\tilde{v}(x,t)\rightarrow 1, \forall x, \forall t$ (while $\tilde{u}(x,t)=\frac{\sigma^2}{2}\nabla\log\tilde{\rho}(x,t)\rightarrow 0$), which is just the optimal control   of the corresponding optimal transport problem. This is in agreement with the general theory \cite{Mik,leo}.

\section{Numerical example}\label{NE}
{
We consider  highly overdamped Brownian motion in a force field. Then, in a very strong sense \cite[Theorem 10.1]{N1}, the Smoluchowski model in configuration variables is a good approximation of the full Ornstein-Uhlenbeck model in phase space. We are interested in planar Brownian motion in the quadratic potential 
$$V(x)=\frac{1}{2}x'3I_2x=\frac{1}{2}[x_1,x_2]\left[\begin{matrix}3 & 0\\ 0 & 3\end{matrix}\right]\left[\begin{matrix}x_1\\ x_2\end{matrix}\right].
$$ 
Taking the mass of the particle to be equal to one, the planar evolution of the Brownian particle is given by the Smoluchowski equation
\begin{equation}\label{SMOL}
	dx(t)=-\nabla V(x(t))dt+\sqrt{\epsilon}dw(t), \quad -\nabla V(x)= Ax, \quad A=\left[\begin{matrix}-3 & 0\\ 0 & -3\end{matrix}\right],
\end{equation}
where $w$ is a standard, two-dimensional Wiener process.
} The observed distributions of the particle at the two end-points in time are normal with mean and variance
	\[
		m_0=\left[\begin{matrix}-5 \\-5\end{matrix}\right],\mbox{ and } \Sigma_0=
						\left[\begin{matrix}1 & 0\\ 0 & 1\end{matrix}\right]
	\]
at $t=0$, and
	\[
		m_1=\left[\begin{matrix}5 \\5\end{matrix}\right],\mbox{ and }
						\Sigma_1=\left[\begin{matrix}1 & 0\\ 0 & 1\end{matrix}\right]
	\]
at $t=1$, respectively. We then seek to interpolate the density of the particle at intermediate points by solving the corresponding Schr\"{o}dinger bridge problem where (\ref{SMOL}) plays the role of an a priori evolution.

Figure \ref{fig:interpolation1} depicts the flow between the two one-time marginals for the Schr\"{o}dinger bridge when $\epsilon=9$. The transparent tube represent the ``$3\sigma$ region''
	\[
		(x'-m_t')\Sigma_t ^{-1}(x-m_t) \le 9.
	\]
Typical sample paths are shown in the figure. Similarly, Figures \ref{fig:interpolation2} and  \ref{fig:interpolation3} depict the corresponding flows for $\epsilon=4$ and $\epsilon=0.01$, respectively. Figure \ref{fig:interpolation4} is the limit that represents optimal mass transport with prior velocity field $v(x,t)=Ax$; the sample paths are smooth curves that follow optimal transportation paths. As $\epsilon\searrow 0$, the paths {
of the bridge process resemble those of the corresponding optimal transport process} for $\epsilon=0$.
For comparison, we also provide in Figure \ref{fig:interpolation5} the interpolation corresponding to optimal transport without a prior, which is given by a constant speed translation.
\begin{figure}\begin{center}
    \includegraphics[width=0.60\textwidth]{./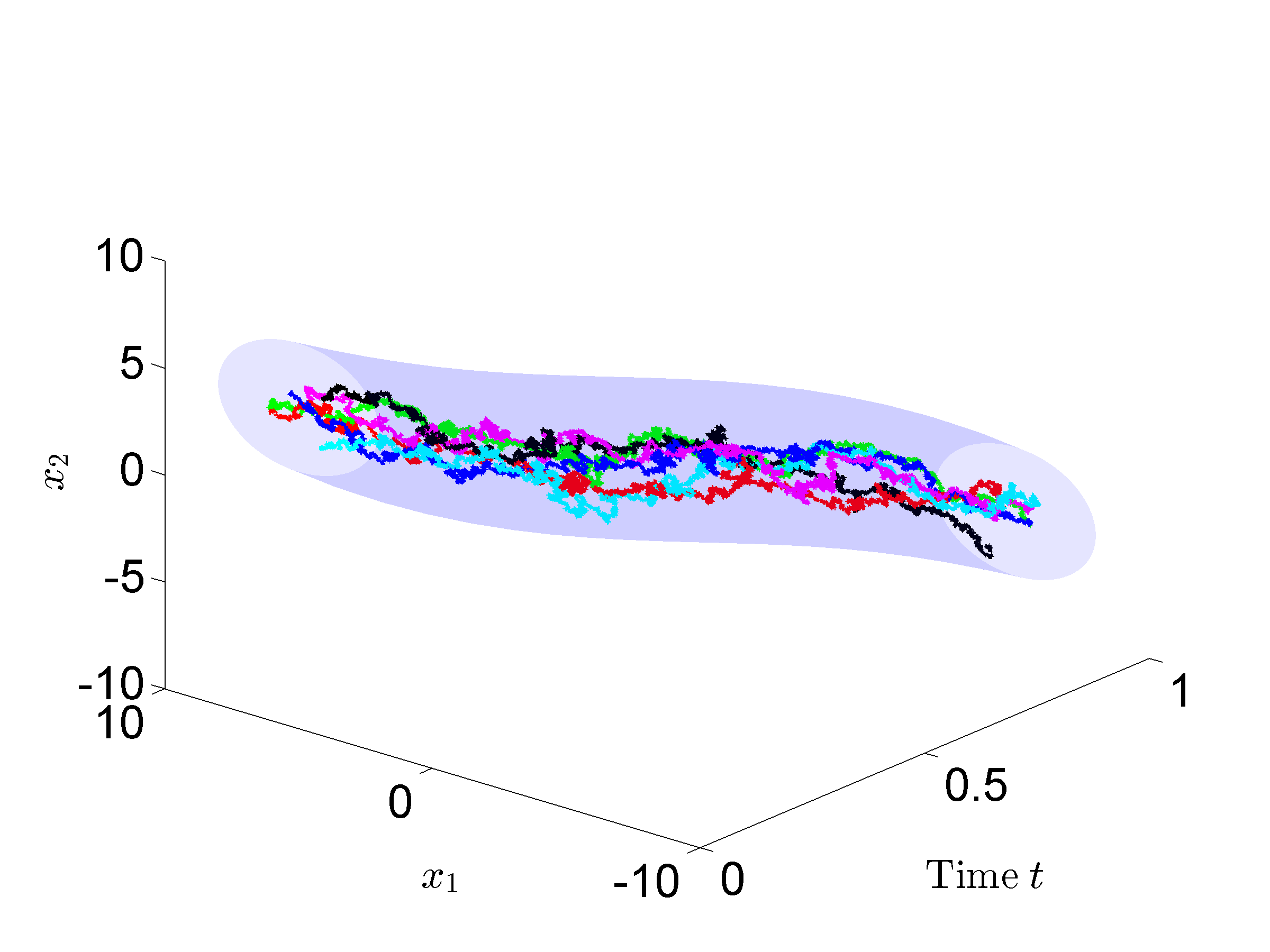}
    \caption{Interpolation based on Schr\"{o}dinger bridge with $\epsilon=9$}
    \label{fig:interpolation1}
\end{center}\end{figure}
\begin{figure}\begin{center}
    \includegraphics[width=0.60\textwidth]{./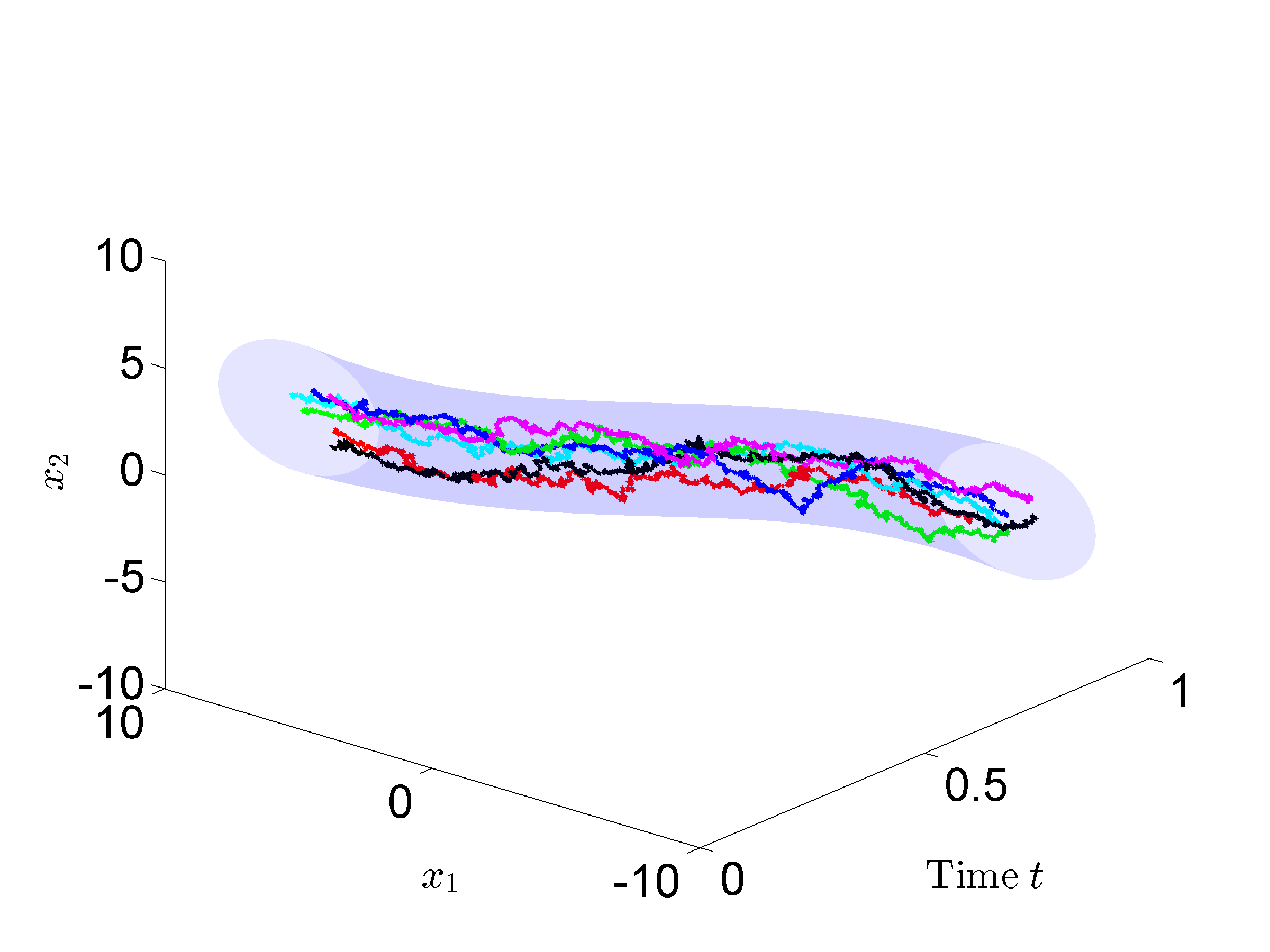}
    \caption{Interpolation based on Schr\"{o}dinger bridge with $\epsilon=4$}
    \label{fig:interpolation2}
\end{center}\end{figure}
\begin{figure}\begin{center}
    \includegraphics[width=0.60\textwidth]{./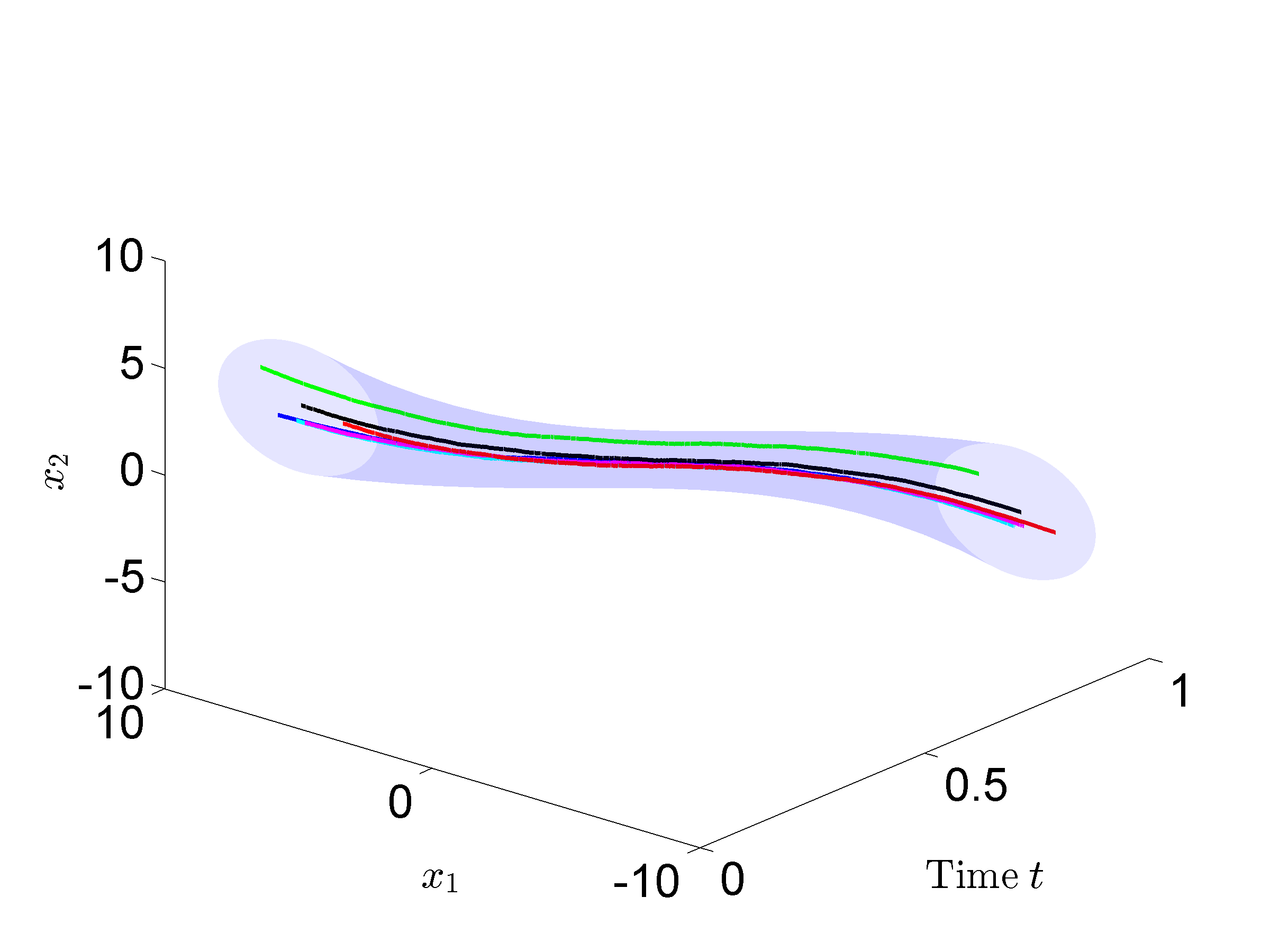}
    \caption{Interpolation based on Schr\"{o}dinger bridge with $\epsilon=0.01$}
    \label{fig:interpolation3}
\end{center}\end{figure}
\begin{figure}\begin{center}
    \includegraphics[width=0.60\textwidth]{./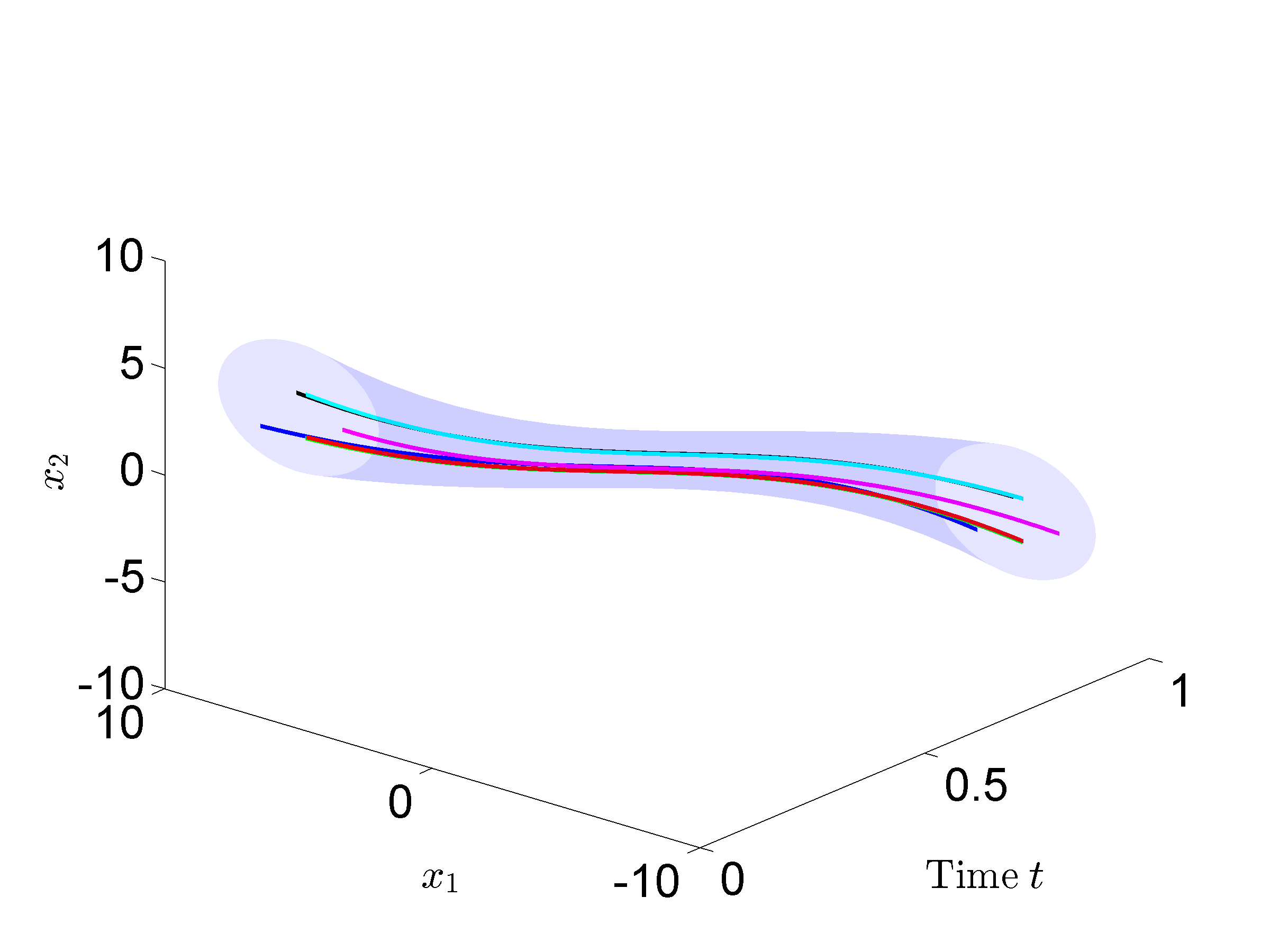}
    \caption{Interpolation based on optimal transport with prior}
    \label{fig:interpolation4}
\end{center}\end{figure}
\begin{figure}\begin{center}
    \includegraphics[width=0.60\textwidth]{./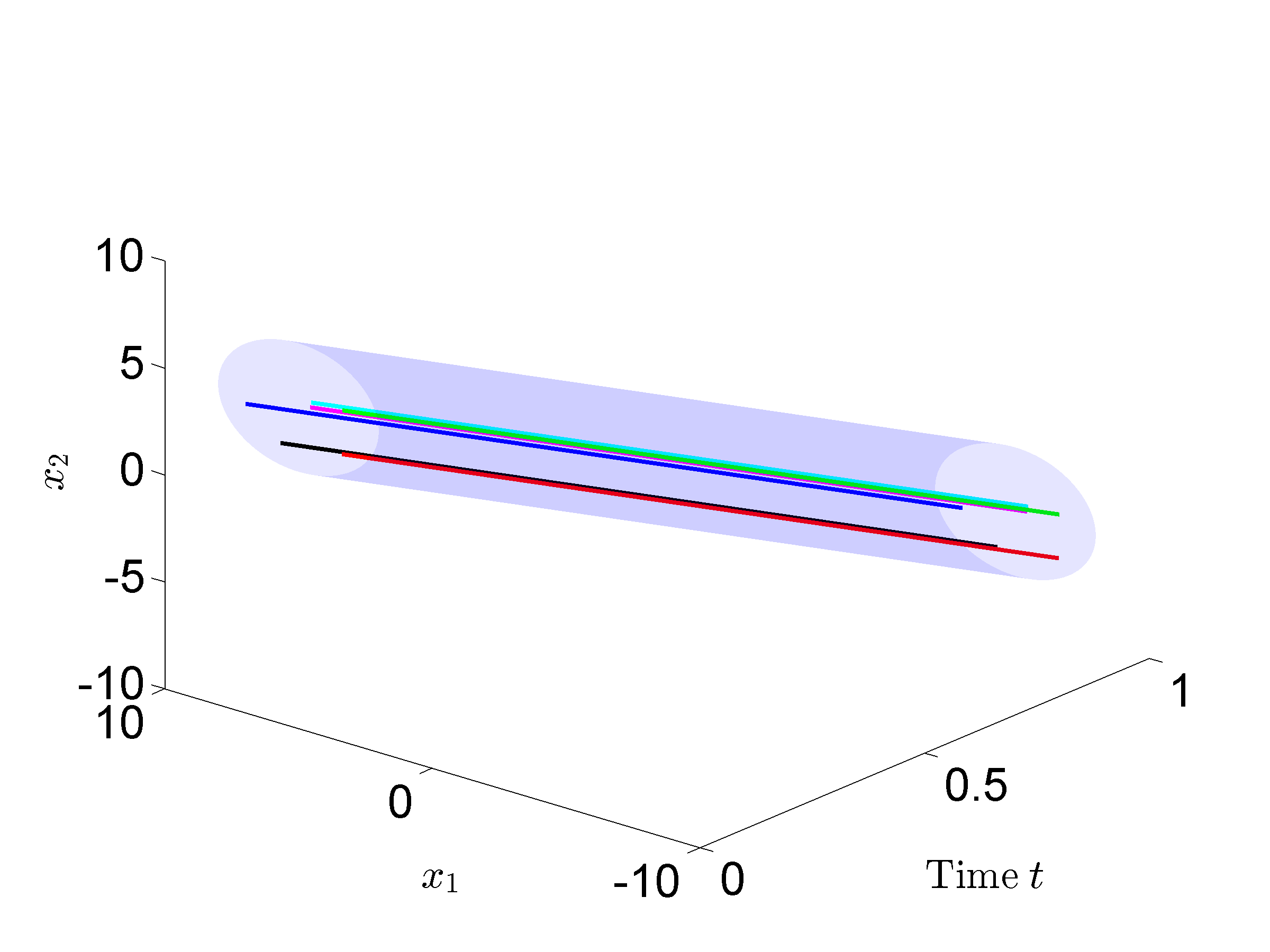}
    \caption{Interpolation based on optimal transport without prior}
    \label{fig:interpolation5}
\end{center}\end{figure}

\section*{Acknowledgment}
Part of the research of M.P. was conducted during a stay at the Courant Institute of Mathematical Sciences of the New York University whose hospitality is gratefully acknowledged.

\end{document}